\newcommand{\eat}[1]{}
\theoremstyle{definition}
  \newcommand{\squishlist}{
 \begin{list}{$\bullet$}
  { \setlength{\itemsep}{0pt}
     \setlength{\parsep}{3pt}
     \setlength{\topsep}{3pt}
     \setlength{\partopsep}{0pt}
     \setlength{\leftmargin}{1.0em}
     \setlength{\labelwidth}{1em}
     \setlength{\labelsep}{0.5em} } }
\newcommand{\squishlisttwo}{
 \begin{list}{$\bullet$}
  { \setlength{\itemsep}{0pt}
    \setlength{\parsep}{0pt}
    \setlength{\topsep}{0pt}
    \setlength{\partopsep}{0pt}
    \setlength{\leftmargin}{2em}
    \setlength{\labelwidth}{1.5em}
    \setlength{\labelsep}{0.5em} } }
\newcommand{\squishend}{
  \end{list}  }
\begin{document}
%
\title{Efficient Routing for Cost Effective Scale-out\\ Data Architectures}

\author{\IEEEauthorblockN{Ashwin Narayan}
\IEEEauthorblockA{Williams College\\ 
\footnotesize{ashwin.narayan@williams.edu}}
\and
\IEEEauthorblockN{Vuk Markovi\'c}
\IEEEauthorblockA{University of Novi Sad\\
\footnotesize{mvukmarko@gmail.com}}
\and
\IEEEauthorblockN{Natalia Postawa}
\IEEEauthorblockA{Adam Mickiewicz University in Pozna\'n\\
\footnotesize{np96924@st.amu.edu.pl}}
\and
\IEEEauthorblockN{Anna King}
\IEEEauthorblockA{University College Cork\\
\footnotesize{111396801@umail.ucc.ie}}
\and
\IEEEauthorblockN{Alejandro Morales}
\IEEEauthorblockA{University of California, Los Angeles\\
\footnotesize{ahmorales@math.ucla.edu}}
\and\and\and\and
\IEEEauthorblockN{K.~Ashwin Kumar}
\IEEEauthorblockA{Veritas Labs\\
\footnotesize{ashwin.kayyoor@veritas.com}}
\and\and\and\and
\IEEEauthorblockN{Petros Efstathopoulos}
\IEEEauthorblockA{Symantec Research Labs\\
\footnotesize{petros\_efstathopoulos@symantec.com}}
}

\newcommand{\topic}[1]{\vspace{0pt} \noindent {\underline{\bf #1}}}


%


\maketitle

\begin{abstract}
Efficient retrieval of information is of key importance when using Big Data systems. In large scale-out data architectures, data are distributed and replicated across several machines. Queries/tasks to such data architectures, are sent to a router which determines the machines containing the requested data. 
Ideally, to reduce the overall cost of analytics, the smallest set of machines required to satisfy the query should be returned by the router. 
Mathematically, this can be modeled as the \emph{set cover} problem, which is NP-hard\eat{ (computationally intractable)}, thus making the routing process a balance between optimality and performance. 
Even though an efficient greedy approximation algorithm for routing a single query exists, there is currently no better method for processing multiple queries than running the greedy set cover algorithm repeatedly for each query. 
This method is impractical for Big Data systems and the \eat{current baseline method of choice}state-of-the-art techniques route a query to all machines and choose as a cover the machines that respond fastest. 
In this paper, we propose an efficient technique to speedup the routing of a 
large number of real-time queries while minimizing the number of machines that each query touches ({\em query span})\eat{ in both real-time and non-real-time cases}. 
We demonstrate that \eat{precomputing, using variants of known clustering algorithms, can be used to cluster queries for more efficient routing}by analyzing the correlation between known queries and performing query clustering, we can reduce the set cover computation time, thereby significantly speeding up routing of unknown queries.\eat{Furthermore, for the real-time case, where we do not know the queries beforehand, we show that our algorithms can analyze queries in real-time and process them to improve the routing as queries arrive.} Experiments show that our incremental set cover-based routing is $2.5$ times faster and can return on average $50\%$ fewer machines per query when compared to repeated greedy set cover and baseline routing techniques.
\eat{Processing the queries as clusters reduces the set cover time cost}\eat{Our results can contribute to speeding up the routing of multiple queries as our algorithm is closer to optimal than baseline, and incurs lower run-time than the ``repeated greedy'' approach.}
\end{abstract}

\IEEEpeerreviewmaketitle

\section{Introduction}\label{Ch:Introduction}

Large-scale data management and analysis is rapidly gaining importance because of an exponential increase in the data volumes being generated in a wide range of application domains. The deluge of data (popularly called ``Big Data") creates many challenges in storage, processing, and querying of such data. These challenges are intensified further by the overwhelming variety in the types of applications and services looking to make use of Big Data. There is growing consensus that a single system cannot cater to the variety of workloads, and different solutions are being researched and developed for different application needs. For example, column-stores are optimized specifically for data warehousing applications, whereas row-stores are better suited for transactional workloads. There are also hybrid systems for applications that require support for both transactional workloads and data analytics. Other systems are being developed to store different types of data, such as document data stores for storing XML or JSON documents, and graph databases for graph-structured or RDF data.

One of the most popular approaches employed in order to handle the increasing volume of data is to use a cluster of commodity machines to parallelize the compute tasks (\emph{scale-out} approach). Scale-out is typically achieved by partitioning the data across multiple machines. Node failures present an important problem for scale-out architectures resulting in data unavailability. In order to tolerate machine failures and to improve data availability, data replication is typically employed. Although large-scale systems deployed over scale-out architectures enable us to efficiently address the challenges related to the volume of data, processing speed and and data variety, we note that these architectures are prone to resource inefficiencies.  Also, the issue of minimizing resource consumption in executing large-scale data analysis tasks is not a focus of many data systems that are developed to date. In fact, it is easy to see that many of the design decisions made, especially in scale-out architectures, can typically reduce overall execution times, but can lead to inefficient use of resources~\cite{Kumar:2014:SWD:2691523.2691545}\cite{Quamar:2013:SSW:2452376.2452427}\cite{ashwin_phd}.  As the field matures and the demands on computing infrastructure grow, many design decisions need to be re-visited with the goal of minimizing resource consumption. Furthermore, another impetus is provided by the increasing awareness that the energy needs of the computing infrastructure, typically proportional to the resource consumption, are growing rapidly and are responsible for a large fraction of the total cost of providing the computing services. 

To minimize the scale-out overhead, it is often useful to control the unnecessary spreading out of compute tasks across multiple machines. Recent works~\cite{Kumar:2014:SWD:2691523.2691545}\cite{Quamar:2013:SSW:2452376.2452427}\cite{ashwin_phd}\cite{Curino:2010:SWA:1920841.1920853}\cite{Kulkarni:2015:SSE:2766484.2738035} have demonstrated that minimizing the number of machines that a query or a compute task touches (\emph{query span}) can achieve multiple benefits, such as: \emph{minimization of communication overheads}, \emph{lessening total resource consumption}, \emph{reducing overall energy footprint} and \emph{minimization of distributed transactions costs}.
\eat{\begin{itemize}
\setlength\itemsep{0.2em}
\item \emph{Minimization of communication overheads}
\item \emph{Lessens total resource consumption}
\item \emph{Reduces overall energy footprint}
\item \emph{Minimizes the cost of distributed transactions}
\end{itemize}}

\subsection{The Problem}

In a scaled-out data model, when a query arrives to the query router, it is forwarded to a subset of machines that contain the data items required to satisfy the query. \eat{Naturally Symantec wants to optimize their ability to route this data and handle these millions of queries efficiently, thus avoiding overloading machines and conserving energy.} In such a data setup, a query is represented as the subset of data needed for its execution. As the data are distributed, this implies that queries need to be routed to multiple machines hosting the necessary data. To avoid unnecessary scale-out overheads, the size of the set of machines needed to cover the query should be minimal~\cite{Kumar:2014:SWD:2691523.2691545}\cite{Quamar:2013:SSW:2452376.2452427}\cite{ashwin_phd}. Determining such a minimal set is  mathematically stated as the \emph{set cover} problem, which is an NP-hard problem\eat{(see \cref{Section:backgoundsc})}. The most popular approximate solution of the set cover problem is a \textit{greedy algorithm}\eat{(greedy algorithms are a class of algorithms that look for ``locally'' optimal choices while trying to build an optimal solution)}. However, running this algorithm on each query can be very expensive or unfeasible when several million queries arrive all at once or in \textit{real-time} (one at a time) at machines with load constraints. Therefore, in order to speed up the routing of queries, we want to reuse previous {\em set cover} computations across queries without sacrificing optimality. In this work, we consider a generic model where a query can be either a database query, web query, map-reduce job or any other task that touches set of machines to access multiple data items. 


There is a large amount of literature available on single query {\em set cover} problems (discussed in Section~\ref{sec:related_work}). However, little work has been done on sharing {\em set cover} computation across multiple queries. As such, our main objective is to design and analyze different algorithms that efficiently perform {\em set cover} computations on multiple queries. We catalogue any assumptions used and provide guarantees relating to them where possible, with optimality discussions and proofs provided where necessary to support our work. We conducted an extensive literature review looking to understand algorithms, data structures and precomputing possibilities. This included algorithms such as  \textit{linear programming} \cite{vaz13}, data structures such as {\em hash tables}, and precomputing possibilities such as {\em clustering}. 
We built appropriate models for analysis that included a model with nested queries, and a model with two intersecting queries. These models afforded us a better understanding of the core of the problem, while also allowing us to test the effectiveness of our methods and tools. We developed an algorithm that will solve {\em set cover} for multiple queries more efficiently than repeating the greedy {\em set cover} algorithm for each query, and with better optimality than the current algorithm in use (see Section~\ref{sec:base}). We propose algorithm frameworks that solve the problem and experimentally show that they are both fast and acceptably optimal. \eat{For the non-real-time framework, we first cluster all the queries, and then process the 
clusters to return the covers for each of the queries. For the real-time framework, we}Our framework essentially analyzes the history of queries to cluster them\eat{a fraction of the queries (the ones which we know)} and use that information to process the new incoming queries in real-time. Our evaluation of the clustering and processing algorithms for both frameworks show that both sets are fast and have good optimality. 

The key contributions of our work are as follows:
\squishlist
\item To the best of our knowledge, our work is the first to enable sharing of {\em set cover} computations across the input sets (queries) in real-time and amortize the routing costs for queries while minimizing the average {\em query span}. 
\eat{\item To the best of our knowledge, our work is the first to make use of information about previous set cover computations of subsets (data accessed by a particular query in our case) of a particular dataset for improving the speed of new computations of set covers for the new subsets (data accessed by new queries)\eat{ of the same dataset}. We term this problem as the {\em incremental set cover} problem.}
\item We systematically divide the problem into three phases: {\em clustering the known queries}, {\em finding their covers}, and, with the information from the second phase, {\em covering the rest of the queries} as they arrive {\em in real time}. Using this approach, each of the three phases can then be improved separately, therefore making the problem easier to tackle.
\item We propose a novel entropy based real-time clustering algorithm to cluster the queries arriving in real-time to solve the problem at hand. Additionally, we introduce a new variant of greedy {\em set cover} algorithm that can cover a query $Q_i$ with respect to another correlated query $Q_j$.
\item Extensive experimentation on real-world and synthetic datasets shows that our incremental {\em set cover}-based routing is $2.5\times$ faster and can return on an average $50\%$ fewer machines per query when compared to repeated greedy {\em set cover} and baseline routing techniques.
\eat{\item We  share  a  synthetic  correlated query workload  generator  for  evaluating  our system that may be of independent interest for evaluating other systems.}
\squishend

The remainder of the paper is structured as follows.
Sections~\ref{sec:related_work} and~\ref{sec:background} present related work and problem background.\eat{, making the set cover problem more precise and describing our background research that went into tackling it} In Section~\ref{sec:clustering} we describe our query clustering algorithm. Section~\ref{sec:cluster_processing} explains how we deal with the clusters once they are created and processing of real-time queries in Section~\ref{sec:real-time}.\eat{ we used to combine similar queries, and demonstrate its effectiveness Section~4 then explains how we deal with the clusters once they are created and gives experimental results.} Finally, Section~\ref{sec:experiments} discusses the experimental evaluation of our techniques on both real-world and synthetic datasets, followed by conclusion. \eat{ Finally, in Section~6, we discuss future directions that could be taken to solve this problem more completely.} 

\section{Related Work} 
\label{sec:related_work}
SCHISM by Curino et al.,~\cite{Curino:2010:SWA:1920841.1920853} is one of the early studies in this area that primarily focuses on minimizing the number of machines a database transaction touches, thereby improving the overall system throughput. In the context of distributed information retrieval Kulkarni et al.,~\cite{Kulkarni:2015:SSE:2766484.2738035} show that minimizing the number of document shards per query can reduce the overall query cost. The above related work does not focus on speeding up query routing. Later, Quamar et al.,~\cite{Quamar:2013:SSW:2452376.2452427}\cite{Kumar:2014:SWD:2691523.2691545} presented SWORD, showing that in a scale-out data replicated system minimizing the number of machines accessed per query/job ({\em query span}) can minimize the overall energy consumption and reduce communication overhead for distributed analytical workloads. In addition, in the context of transactional workloads, they show that minimizing query span can reduce the number of distributed transactions, thereby significantly increasing throughput. In their work, however, the router executes the greedy {\em set cover} algorithm for each query in order to minimize the {\em query span}, which can become increasingly inefficient as the number of queries increases. Our work essentially complements all the above discussed efforts, with our primary goal being to improve the query routing performance while retaining the optimality by sharing the {\em set cover} computations among the queries.

There are numerous variants of the {\em set cover} problem, such as an online {\em set cover} problem~\cite{DBLP:journals/siamcomp/AlonAABN09} where algorithms get the input in streaming fashion. Another variant is, $k$-{\em set cover} problem~\cite{unweighted} where the size of each selected set does not exceed $k$. Most of the variants deal with a single universe as input~\cite{agg13}\cite{kle05}\cite{vaz13}, whereas in our work, we deal with multiple inputs (queries in our case). Our work is the first to enable sharing of {\em set cover} computations across the inputs/queries thereby improving the routing performance significantly.

In this work, in order to maximize the sharing of {\em set cover} computations across the queries, we take advantage of correlations existing between the queries. Our key approach is to cluster the queries so that queries that are highly similar belong in the same cluster. Queries are considered highly similar if they share many of their data points. By processing each cluster (instead of each query) we are able to reduce the routing computation time. There is rich literature on clustering queries to achieve various objectives. Baeza-Yates et al.,~\cite{qruqlinse} perform clustering of search engine queries to recommend topically similar queries for a given future query. In order to analyze user interests and domain-specific vocabulary, Chuang et al.,~\cite{1183888} performed hierarchical Web query clustering. There is very little work in using query clustering to speed up query routing, while minimizing the average number of machines per query for scale-out architectures. Our work provides one of the very first solutions in this space.  

Another study~\cite{2002:QCU:503104.503108} describes the search engine query clustering by analyzing the user logs where any two queries is said to be similar if they touch similar documents. Our approach follows this model, where a query is represented as a set of data items that it touches, and similarity between queries is determined by the similar data items they access.


\section{Problem Background}
\label{sec:background}
Mathematically, the set cover problem can be described as follows: given the finite universe $Q$, and a collection of sets $\mathscr{M} = \{S_1,S_2, \ldots, S_m\}$, find a sub-collection we call cover of $Q$, $\mathscr{C}\subseteq \mathscr{M}$, of minimal size, such that $Q\subseteq\bigcup\mathscr{C}$\eat{(oftentimes, when talking about covering just one subset of $U$, we identify $Q$ to be our universe and for sets take their intersections with $Q$)}. This problem is proven to be \textit{NP-hard} \cite{kle05}\eat{(there is no known efficient algorithm that solves it)}. Note that a brute force search for a minimal cover requires looking at $2^m$ possible covers. Thus instead of finding the optimal solution, approximation algorithms are used which trade optimality for efficiency~\cite{pas97}\cite{vaz13}. The most popular one uses a greedy approach where at every stage of the algorithm, choose the set that covers most of the so far uncovered part of $Q$, which is a $\ln n$ approximation that runs in 
$O(n)$ time. 

The main focus of this work is the \textit{incremental set cover problem}. 
Mathematically, the only difference from the above is 
that instead of covering only one universe $Q$, set covering is performed on each universe 
$Q_{i}$ from a collection of universes $\mathscr{Q} = \{Q_1, Q_2,\dots, Q_N\}$. 
Using the greedy approach separately on each $Q_{i}$ from 
$\mathscr{Q}$ is the na\"{i}ve approach, but when $N$ is large, running the greedy 
algorithm repeatedly becomes unfeasible. We take advantage of information about 
previous computations, storing information and using it later to compute remaining 
covers faster. \eat{Another natural constraint is 
assigning a load $l_{i}$ to each $M_{i}$ from $\mathscr{M}$ so that each $M_{i}$ 
can be used at most $l_{i}$ times when covering sets from $\mathscr{Q}$.}

In this paper, elements in the universe are called \textit{data}, sets from $\mathscr{M}$ are \textit{machines} and sets from $\mathscr{Q}$ are \textit{queries}. \eat{EXPERIMENTS --- The sizes considered in this project are the following: each data unit is replicated $3$ times, all machines are the same in size, and in our case the size of the universe is about $100000$ units of data. The numbers of queries is about $50000$, and each query asks for between $6$ and $15$ data units.} Realistically, it can be assumed that data are distributed randomly on the machines with replication factor of $r$\eat{ \cite{kum14}}\eat{, and that queries are somehow correlated}. In this work, we take advantage of the fact that, in real world, queries are strongly correlated~\cite{DBLP:conf/ijcai/ZhaoSXHZ15}\cite{DBLP:conf/sigmod/GuptaKRBGK11} and enable sharing {\em set cover} computations across the queries. Essentially, this amortizes the routing costs across the queries improving the overall performance while reducing the overall cost of analytics by minimizing the average {\em query span} for the given query workload~\cite{Kumar:2014:SWD:2691523.2691545}\cite{Quamar:2013:SSW:2452376.2452427}.\eat{The correlation arises from the practical considerations of database storage since it is unlikely that queries are completely independent of each other in a real system. For example, imagine that the different queries from the same company access similar data items or that some data items are often access together. There are two cases to consider: when queries are known beforehand, and when they arrive in real-time. In the first case, all queries are known beforehand but it is not known in what order the queries will arrive.\eat{ It should be noted that it is assumed that there is enough time and memory for pre-processing in the first case before the queries start actually arriving.} In the second case we know a fraction of the queries and process them as in the first case; then new queries will begin arriving for which we have no prior information.} \eat{When queries arrive in real-time they do not arrive continuously; there will be periods of time where there will be no queries arriving.}




\newtheorem{proposition}{Proposition}

\section{Query Clustering}
\label{sec:clustering}
In order to speedup the {\em set cover} based routing, our key idea is to reduce the number of queries needed to process. More specifically, Given $N$ queries, we want to cluster the them into $m$ groups ($m<<N$) so that we can calculate set cover for each cluster instead of calculating set cover for each query. Once we calculate set cover for each cluster, next step would be to classify each incoming real-time query to one of the clusters and re-use the pre-computed set cover solutions to speedup overall routing performance. To do so, we employ clustering as the key technique for precomputation of the queries. 
An ideal clustering algorithm would cluster queries that had large common intersections with each other; it would also be scalable since we are dealing with large numbers of queries. \eat{Especially for the real-time case,}
In order to serve real-time queries we need an incoming query to be quickly put into the correct cluster.

Most of the\eat{simpler} clustering algorithms in the literature require the number of clusters to be given by the user. 
However, we do not necessarily know the number of clusters beforehand\eat{(see
\cref{Ch:Future} for potential bounds on the number of clusters). We are also willing to accept some 
mediocre clusters since we need only to beat the optimality of the brute force strategy}. We also want to be 
able to theoretically determine bounds for the size of clusters, so our final algorithm can have bounds as well. 
To that effect, we developed entropy-based real-time clustering algorithm\eat{s heavily deriving from the literature (especially 
COOLCAT, described in \cite{bar02}), but with important modifications}. 
\eat{\subsection{Entropy}}
\label{subsec:entropy}
Using entropy for clustering has precedent in the literature (see \cite{bar02}). 
Assume that we have our universe $U$ of size $n$, let $K$ be a cluster containing queries $Q_1, \ldots, Q_m$. Then we can define the probability $p_j$ of data item 
$j$ being in the cluster $K$: 
\begin{equation}
p_j(K) = \frac{1}{|K|} \sum_{i=1}^{|K|}\chi_j(Q_i)
\end{equation}
where the characteristic function $\chi_j$ is defined by: 
\begin{equation}
\chi_j(Q) = \left\{
\begin{array}{ll}
1, & j \in Q \\
0, & j \notin Q
\end{array}
\right.
\end{equation}
Then we can define the \emph{entropy} of the cluster, 
$S(K)$ as 
\begin{equation}
S(K) = - \sum_{j=1}^n p_j(K)\log_2 p_j(K) + (1 - p_j(K))\log_2 (1- p_j(K)) \label{eq:entropy}
\end{equation}


This entropy function is useful because it peaks at $p=0.5$ and is $0$ 
at $p=0$ and $p=1$. Assume 
we are considering a query $Q$ and seeing if it should join 
cluster $K$. For any data element $j\in Q$, if most of the elements in $K$ do not contain $j$, then 
adding $Q$ to $K$ would increase the entropy; conversely if most of the elements contain $j$, then 
adding $Q$ would decrease the entropy. Thus, minimizing entropy forces a high degree of similarity between clusters.

\subsection{The $\mathtt{simpleEntropy}$ Clustering Algorithm:} \label{sec:simpleEntropy}
We developed a simple entropy-based algorithm (pseudocode shown in Algorithm~\ref{alg:simple_entropy}).\eat{which can be found in \cref{alg:simentropy}. As} As each query $Q$ comes in, we compute the entropy of placing the query in each of the current 
clusters and keep track of the cluster which minimizes the \emph{expected entropy}:
given clusters $K_1, \ldots K_m$ in a clustering $\mathscr{K}$, 
the expected entropy is given by: 
\begin{equation}
\label{eq:expent}
\mathbb{E}(\mathscr{K}) = \frac{1}{m}\sum_{j=1}^{m}|K_j|\cdot S(K_j) 
\end{equation}
If this 
entropy is above the \emph{threshold} described below, the query starts its own 
cluster. Otherwise, the query is placed into the cluster that minimizes entropy. 

Suppose we are trying to decide if query 
$Q = \{x_1, \ldots, x_n\}$ 
should be put into cluster $K$. Let $p_i$ be the frequency with which 
$x_i$ is in the clusters of $K$. Then define the set
\[
T(Q,K) = \{x_i \in Q \: : \: p_i \ge \theta_1\}
\]
for some threshold $\theta_1$. We say that $Q$ is \emph{eligible} for 
placement in $C$ if $|T(Q,C)| \ge \theta_2 |Q|$ for some other threshold
$\theta_2$. Essentially, we say that $Q$ is eligible for placement in $K$ only if ``most of the elements in 
$Q$ are common in $K$,'' where ``most'' and ``common'' correspond to $\theta_1$ and $\theta_2$ and are user-defined.
Of course, we should have $0 \le \theta_1, \theta_2 \le 1$. Then, 
given a clustering $\mathscr{K}$ with clusters $K_1, \ldots, K_m$, we create 
a new cluster for a query $Q$ only when $Q$ is not eligible for placement into
any of the $K_i$. This forces most of the values in 
the query to `agree' with the general structure of the cluster. 

\eat{\begin{proposition}
The $\mathtt{simpleEntropy}$ algorithm has run-time $O(N^2)$, where $N$ is the number of queries.
\end{proposition}
\begin{proof}}

The goal is an algorithm that generates clusters with low entropy. Let us say that a \emph{low-entropy cluster}, 
a cluster for which more than half the data elements contained in it have probability at least $0.9$, is a \emph{tight} cluster. The
opposite is a $\emph{loose}$ cluster, i.e. many elements have probability close 
to $0.5$.


\begin{algorithm}
\caption{A simple entropy based real-time clustering algorithm. Here $\mathscr{Q}$ is the list of 
queries, and $\theta_1$ are $\theta_2$ are the threshold parameters defined in
Section~\ref{subsec:entropy}}
\begin{algorithmic}

\Function{simpleEntropy}{$\mathscr{Q}$, $\theta_1$, $\theta_2$}
\State $\mathscr{C} \gets \varnothing$
\Comment $\mathscr{C}$ is the set of clusters
\State $S \gets 0$ 
\Comment $S$ holds the current expected entropy
\State $M \gets 0$
\Comment $M$ is the number of clustered elements

\For{$Q \in \mathscr{Q}$}
\State $S_Q \gets \infty$
\Comment $S_Q$ is the min expected entropy due to $Q$
\State $C_Q \gets \varnothing$

\For{$C \in \mathscr{C}$}

\If{\text{NOT} \Call{Eligible}{$Q$, $C$}}
\State \text{Continue.}
\EndIf

\If{\Call{ExpectedEntropy}{$Q$,$C$} $<$ $S_Q$}
\State $S_Q \gets$ \Call{ExpectedEntropy}{$Q$,$C$}
\State $C_Q \gets C$
\EndIf

\EndFor

\If{$S_Q < \infty$}
\State \Call{AddToCluster}{$Q$,$C_Q$}
\Else
\State \Call{NewCluster}{$Q$}
\EndIf
\EndFor
\State\Return $\mathscr{C}$

\EndFunction
\end{algorithmic}
\label{alg:simple_entropy}
\end{algorithm}

\subsection{Analysis of the $\mathtt{simpleEntropy}$ Clusters:} We take a more in-depth look at the type of clusters that form with an entropy clustering algorithm. The first question considered was whether the algorithm is more likely to generate one 
large cluster or many smaller clusters. To do this, we considered how the size of a cluster affects 
whether a new query is added to it. We are also interested in how the algorithm
weights good and bad fits: If a query $Q$ contains data elements 
$\{y, x_1, \ldots x_n\}$, we want to determine how many of the $x_i$ need to 
be common in a cluster $K$ to outweigh $y$ being uncommon in $K$. 

The setup is as follows. Assume that we have already sorted $m$ queries into a clustering
$\mathscr{K}$, and assume the
expected entropy of this clustering is $\Omega$. Given $m$ clusters 
$\{K_1, \ldots, K_m\}$, where each cluster $K_j$ has entropy $E(K_j)$, the 
expected entropy is given in Equation~\ref{eq:expent}\eat{\eqref{eq:expent}}. 

Now, we want to calculate the change in expected entropy when a new query, $Q$, is added to a cluster, 
as a function of the cluster's size and composition. As a simple start, we only consider the 
change due to a single data entry. Since the total entropy is additive, 
understanding the behavior due to one data entry helps understand where 
the query is allocated. 

Let $\Delta\mathbb{E}_i$ be the change in expected entropy due to the 
presence or absence of data element $i$. Let $p_i$ be the 
probability value for element $i$ in a given cluster, and let $p^*_i$ be the new probability if the 
query is added to the cluster. We have: 
\begin{equation}
p^*_i = \left\{
\begin{array}{ll}
\dfrac{np_i+1}{n+1}, & i \in Q \\ \\
\dfrac{np_i}{n+1}, & i \notin Q
\end{array}
\right.
\label{eq:pstar}
\end{equation}
Let us also define $S(p_i)$ as the entropy of a single data element, i.e.
\begin{equation}
S(p_i) = -p_i \log_2 p_i - (1-p_i)\log_2(1-p_i)
\end{equation}

\begin{proposition}
With the above pre-requisites, we can derive that the difference in expected entropy due to 
data element $i$ by adding a query to a cluster of size $n$ which had a probability $p_i$ for element
$i$, is: 
\begin{equation}\label{eq:deltaent}
\footnotesize
\Delta\mathbb{E}_i(\mathscr{K}) = \frac{1}{M+1}(M\Omega - nS(p_i) + (n+1)S(p^*_i)) - \Omega
\end{equation}
\end{proposition}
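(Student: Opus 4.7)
The plan is to exploit the fact that the cluster entropy $S(K)$ in equation~(3) decomposes additively over data elements, writing $S(K) = \sum_{i} S(p_i(K))$ with $S(p) = -p\log_2 p - (1-p)\log_2(1-p)$, and then to track the per-element contribution to the size-weighted total expected entropy.

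First, I would multiply the expected-entropy definition by $M = \sum_j |K_j|$ (the total number of clustered queries maintained by the algorithm) to obtain
\[
M\Omega \;=\; \sum_{j=1}^m |K_j|\, S(K_j) \;=\; \sum_{i=1}^n \Bigl(\sum_{j=1}^m |K_j|\, S(p_i(K_j))\Bigr).
\]
This exposes $C_i := \sum_j |K_j|\, S(p_i(K_j))$ as the contribution of data element $i$ to the total size-weighted entropy, so that $M\Omega = \sum_i C_i$.

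Next, I would compute how $C_i$ changes when the query $Q$ is added to a particular cluster $K$ of size $n$. Only that cluster's size and its value $p_i(K)$ are affected: its weight becomes $n+1$ and the probability updates to $p^*_i$ exactly as in equation~(5). Hence the change in $C_i$ is precisely $(n+1)S(p^*_i) - n S(p_i)$, while every other summand in $C_i$, and every other per-element contribution $C_{i'}$ with $i' \neq i$, is untouched. Because $Q$ also increments the overall normalizer from $M$ to $M+1$, treating only element $i$'s perturbation (and freezing every other element's contribution at its old value) gives a new expected entropy
\[
\mathbb{E}' \;=\; \frac{1}{M+1}\bigl[M\Omega - nS(p_i) + (n+1)S(p^*_i)\bigr],
\]
and subtracting $\Omega$ yields the claimed formula for $\Delta\mathbb{E}_i$.

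The main obstacle here is interpretive rather than computational: one has to make precise what "change in expected entropy due to element $i$" means, since in reality many data elements' probabilities are updated simultaneously when $Q$ is inserted. The additive decomposition $S(K) = \sum_i S(p_i(K))$ is exactly what makes this marginal attribution well-defined, and the identity $M\Omega = \sum_i C_i$ is what lets us pull out a single $i$-summand cleanly. A secondary, minor point is the apparent mismatch between the $1/m$ in equation~(4) and the $1/(M+1)$ in the statement; I would resolve this by reading equation~(4) as using $M$ in the denominator, consistent with the algorithm's bookkeeping of $M$ and with the proposition's own normalization.
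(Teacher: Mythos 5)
Your proof is correct and follows essentially the same route as the paper's: both track the size-weighted total entropy, replacing the cluster's old per-element contribution $nS(p_i)$ with the new one $(n+1)S(p^*_i)$, renormalizing by $M+1$, and subtracting $\Omega$. Your explicit additive decomposition $S(K)=\sum_i S(p_i(K))$ and your resolution of the $1/m$ versus $1/(M+1)$ normalization (reading the denominator as the number of clustered queries $M$, as the proposition requires) simply make rigorous what the paper's informal derivation, which likewise divides by $M+1$, takes for granted.
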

\begin{proof}
The derivation is as follows: To get the new expected entropy, we need to remove the old weighted 
entropy of the cluster, which is given by $nS(p_i)$ and add back the new weighted entropy of that 
cluster, given by $(n+1)S(p_i^*)$. And now there are $M+1$ total elements so we divide by 
$M+1$ to obtain the new expected entropy. Then to get the difference, we simply subtract the old 
expected entropy, $\Omega$. 
\end{proof}



\begin{figure}[tb]
\centering
\subfigure[$i\in Q$]{ \includegraphics[width=.22\textwidth]{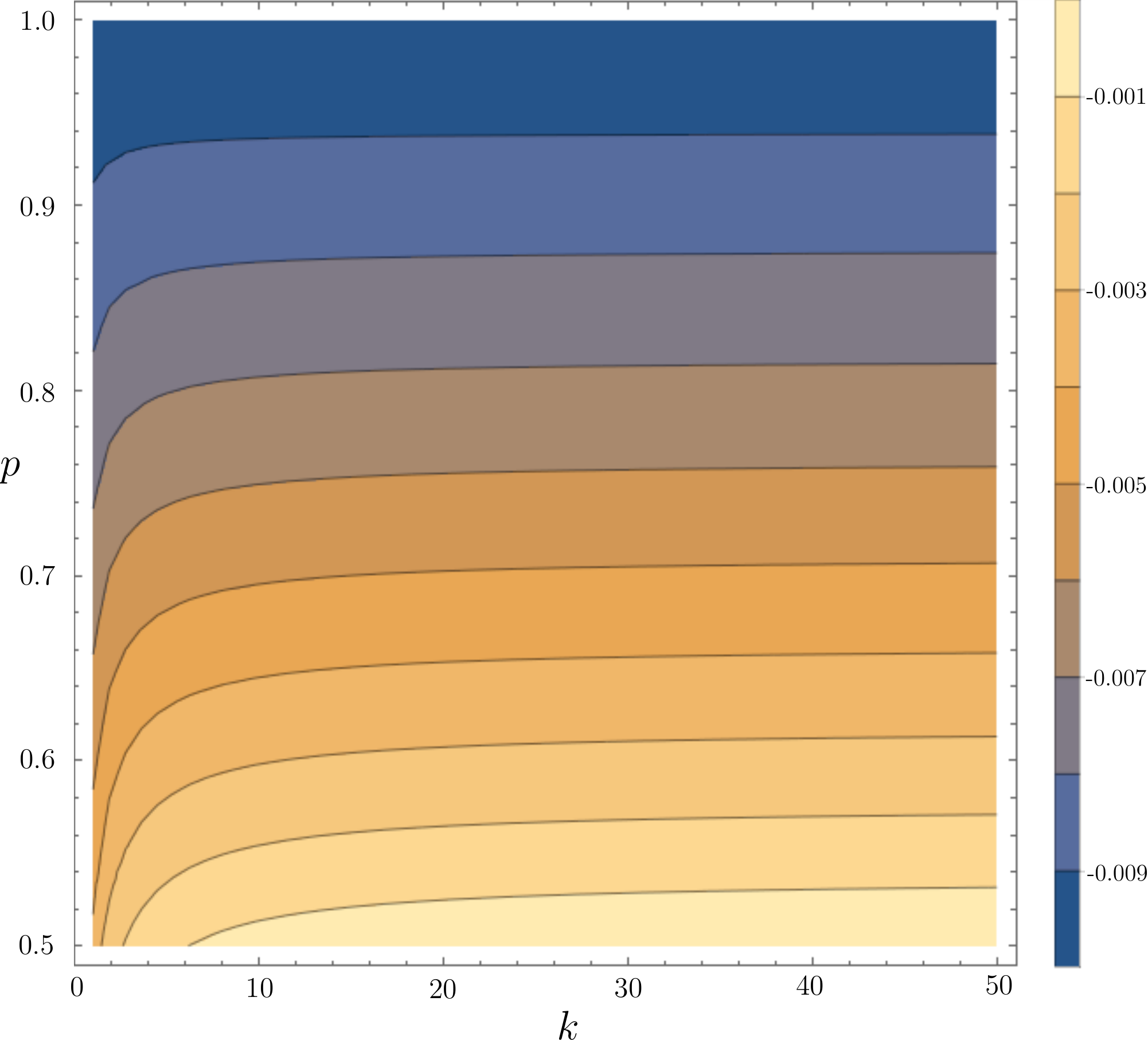} \label{fig:half_in}}
\hfill
\subfigure[$i\notin Q$]{\includegraphics[width=.22\textwidth]{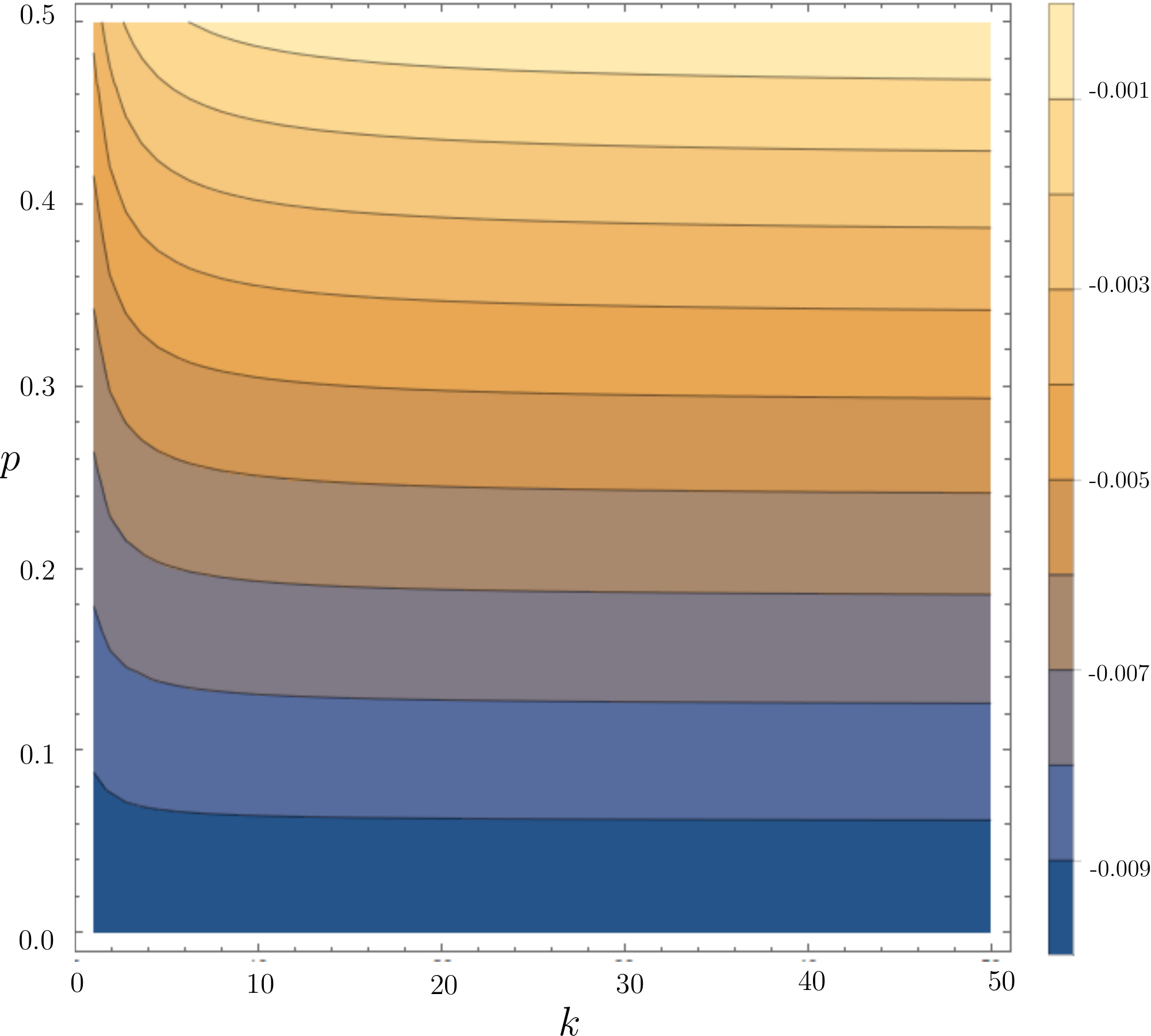} \label{fig:half_out}}
\caption[Expected entropy for single data elements]{Here we plot Equation~\ref{eq:deltaent} setting the expected entropy $\Omega = 1$ and the total number of queries processed
$M = 100$, but the general shapes remain true for a wide variety of parameters. 
The $x$-axis is the number of queries in the cluster under consideration and the 
$y$-axis is the probability of element $i$ in that cluster. We see that the 
entropy difference is independent of cluster size for large enough clusters.}
\label{fig:half}
\end{figure}

Figure~\ref{fig:half} shows some results of this analysis. 
As clusters become large enough, $\Delta\mathbb{E}$ (the change in 
expected entropy), becomes constant
with respect to the size of the cluster, and change steadily with respect to $p$. 
This is desirable because it means that the change in expected entropy caused by 
adding $Q$ is dependent only on $p_i$ and not on the size of the cluster. 

As will be shown in Section~\ref{Ch:QueryProcessing} it is important to our cluster processing scheme that all the queries 
in a cluster share a large common intersection. That is, there should be data elements that are contained in 
all queries; and when a new query arrives, that query should contain 
all those shared elements. This property 
can be given a practical rationale: the data elements present in the large common intersection are probably 
highly correlated with each other (i.e. companies that ask for one of the elements probably ask for them all), so 
it makes sense that an incoming query would also contain all of those data elements if it contained any of them. 

From a theoretical perspective, we can see that there is a high entropy penalty when an incoming query does not 
contain the high probability elements of a cluster. 

\begin{proposition}
Assume we have a cluster $K$ of $m$ 
data elements, all of which have probability $p$. A query $Q$ is being tested for fit in $K$. Let us say that $Q$ contains all but $km$ elements of the cluster $K$. Then, the entropy of adding $Q$ to cluster $K$ is: 
{\small
\begin{align}
\Delta\mathbb{E}(\mathscr{K}) &=& \frac{1}{M+1}(M\Omega - nm E(p) + (n+1)kmE\left(\frac{pn}{n+1}\right) \nonumber\\ 
& & +(n+1)(1-k)mE\left(\frac{pn+1}{n+1}\right))-\Omega 
\label{eq:deltaent_multi}
\end{align}
}
where, as in \eqref{eq:deltaent}, $\mathscr{K}$ is the clustering, 
$M$ is the number of data elements processed,
$\Omega$ is the previous expected entropy of the cluster, and $n$ is 
the number of queries in the cluster. 
\end{proposition}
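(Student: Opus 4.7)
The idea is to generalize the per-element bookkeeping used for Equation~\ref{eq:deltaent} to all $m$ data elements of $K$ simultaneously, exploiting the additive structure of entropy over data coordinates together with the uniform hypothesis that each element has probability $p$.

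First I would rewrite the expected entropy $\mathbb{E}(\mathscr{K})$ by unpacking the cluster entropy $S(K_j)$ into a sum of single-coordinate entropies $S_i(K_j)$, i.e.\ writing
\begin{equation*}
M \cdot \mathbb{E}(\mathscr{K}) \;=\; \sum_{j} |K_j| \sum_{i} S_i(K_j),
\end{equation*}
so that when we add the query $Q$ to cluster $K$ only the inner sum for that one cluster changes. The contribution of $K$ before the insertion is $n \sum_i S(p_i)$, and after the insertion it is $(n+1) \sum_i S(p^*_i)$, where $p^*_i$ is given by Equation~\ref{eq:pstar}. Subtracting $\Omega$ after renormalizing by $M+1$ (the new total number of queries) yields the single master identity
\begin{equation*}
\Delta\mathbb{E}(\mathscr{K}) \;=\; \frac{1}{M+1}\Bigl(M\Omega - n\sum_i S(p_i) + (n+1)\sum_i S(p^*_i)\Bigr) - \Omega,
\end{equation*}
which is the natural multi-coordinate analogue of Equation~\ref{eq:deltaent}.

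Next I would specialize to the hypothesis of the proposition. Because every data element of $K$ has the same probability $p$, the first sum collapses to $\sum_i S(p_i) = m\,E(p)$. For the second sum, split the $m$ coordinates into the $km$ elements missing from $Q$ (for which Equation~\ref{eq:pstar} gives $p^*_i = pn/(n+1)$) and the $(1-k)m$ elements present in $Q$ (for which $p^*_i = (pn+1)/(n+1)$). Substituting these two values and using additivity gives $\sum_i S(p^*_i) = km\,E\!\left(\tfrac{pn}{n+1}\right) + (1-k)m\,E\!\left(\tfrac{pn+1}{n+1}\right)$. Plugging both sums into the master identity and grouping terms produces Equation~\ref{eq:deltaent_multi} verbatim.

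The main obstacle is not algebraic but conceptual: one has to notice that \emph{every} coordinate $i$ of the cluster contributes to the change, not merely those in $Q$, since the cluster size $n$ appears in the denominator of every $p_i$ and therefore every $p^*_i$ shifts. Once this is acknowledged, the derivation reduces to applying Equation~\ref{eq:pstar} to two homogeneous families of coordinates and combining them linearly, so the remaining computation is routine and mirrors the single-coordinate proof of Equation~\ref{eq:deltaent}.
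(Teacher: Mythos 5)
Your proposal is correct and takes essentially the same route as the paper's own proof: both compute the change by subtracting the old weighted entropy $nmE(p)$ of cluster $K$, adding the new weighted entropy obtained by splitting the $m$ coordinates into the $km$ elements absent from $Q$ (new probability $pn/(n+1)$ via Equation~\eqref{eq:pstar}) and the $(1-k)m$ elements present in $Q$ (new probability $(pn+1)/(n+1)$), renormalizing by $M+1$, and subtracting $\Omega$. Your explicit per-coordinate decomposition of $S(K)$ merely makes precise the additive bookkeeping the paper leaves implicit, so the two arguments coincide.
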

\begin{proof}
The derivation of the above is also analogous to that of 
\eqref{eq:deltaent}. The total entropy of the old clustering is 
given by $M\Omega$, from which we subtract out the old weighted entropy
of cluster $K$, given by $n\cdot mE(p)$, since $E(p)$ is the entropy 
contributed by each of the $m$ data elements of $K$, and we weight this by
$n$, the size of the cluster. Then we add the new weighted entropy of 
$K$ after $Q$ is added to it: There are now $km$ data elements which 
have probability $pn/(n+1)$ (since $Q$ does not contain these); and 
there are $(1-k)m$ data elements which have probability 
$(pn+1)/(n+1)$ (since $Q$ does contain these). Finally, since we are 
looking for the change in expected entropy, we subtract out $\Omega$, 
the old expected entropy. 
\end{proof}

\begin{figure}[tb]
\centering
\label{fig:de_full}
\subfigure[\eqref{eq:deltaent_multi} for the full range of probabilties and disjointedness.]{ \includegraphics[width=.22\textwidth]{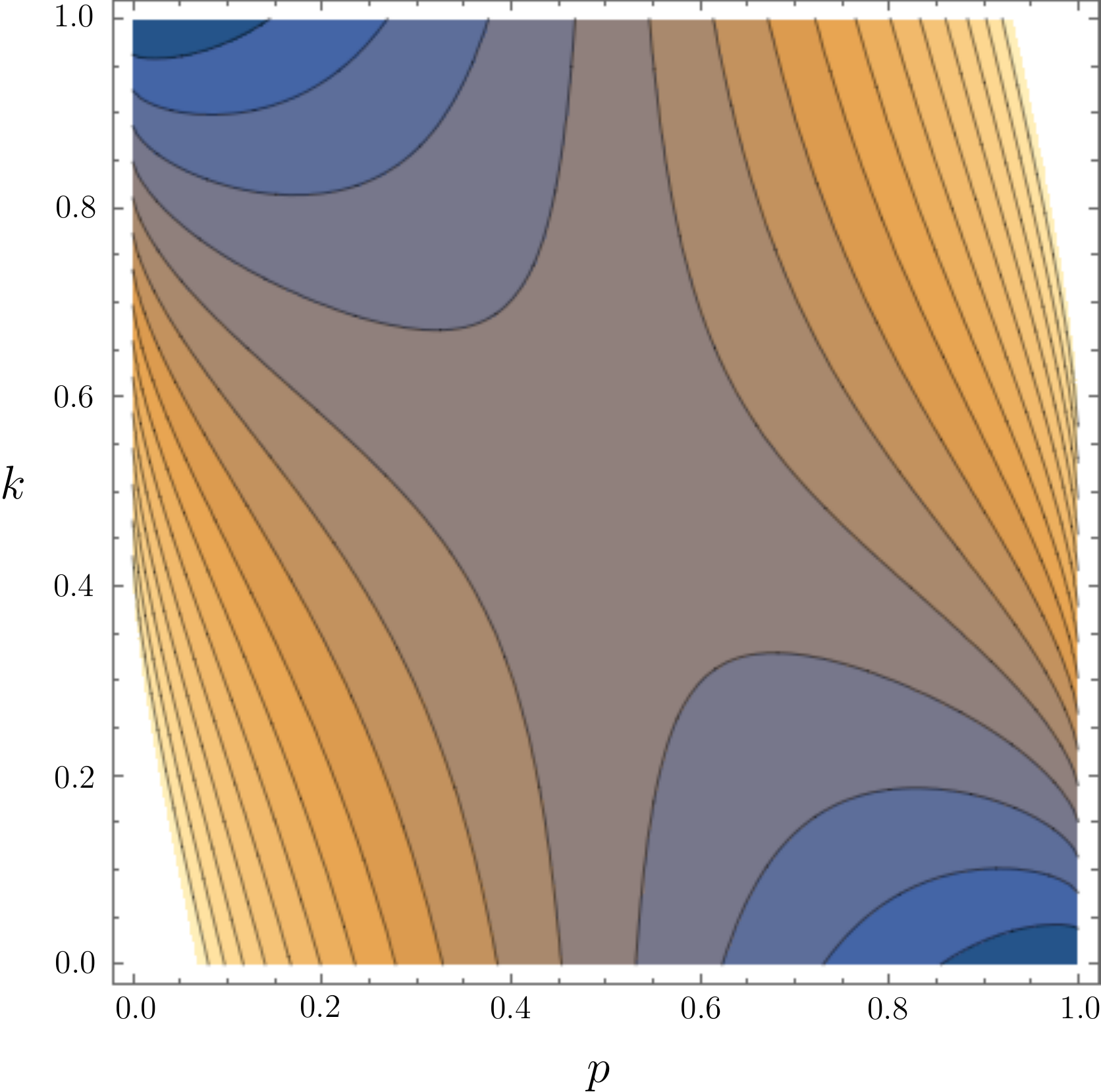}\label{fig:de_full}}
\hfill
\subfigure[\eqref{eq:deltaent_multi} concentrating on high probabilities and small $k$]{ \includegraphics[width=.22\textwidth]{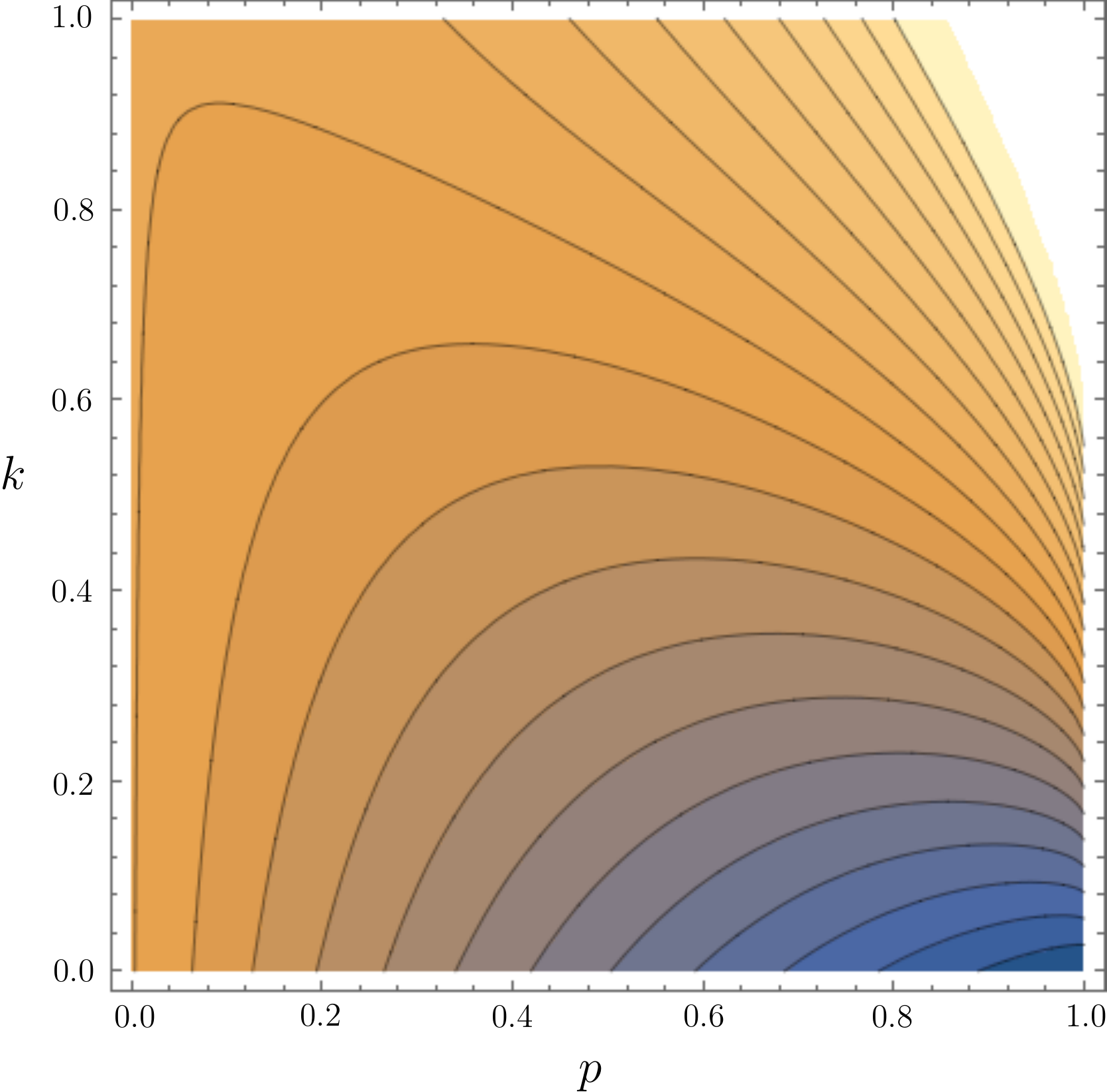}\label{fig:de_half}}
\caption[Entropy of multiple disjoint elements]{A summary of the 
theoretical basis for the effectiveness of $\mathtt{simpleEntropy}$. In 
\protect\subref{fig:de_full} we see that all contours become steeper
as we reach probability extremes. The central area is a zone of 
mediocre clustering. In \protect\subref{fig:de_half} we zoom in on 
the high probability region. Here we see that most high probability 
data items \emph{must} be contained in an incoming query.}
\label{fig:deltaent_multi}
\end{figure}

In Figure~\ref{fig:deltaent_multi}, we plot Equation~\ref{eq:deltaent_multi} holding 
$n$ and $m$ at $20$ (a typical value for clusters with our parameters according
to our experiments.)
Figure~\ref{fig:de_full} actually provides a general overview of the 
clustering landscape for an incoming query. We can divide the graph into
three regions of interest: (1) the bottom-right and top-left ($k \approx 1 - p$), or the 
\emph{high-quality} region; (2) the central region 
($k \approx p \approx 0.5$), or the \emph{mediocre} region
and 
(3) the top-right and bottom-left ($k \approx p \approx 1 \text{ or } 0$), the \emph{low-quality} region. We discuss the implication of each region in turn:

\topic{High-quality Region:}
\label{top:cluster}
These are the points of lowest entropy in the plot. In the 
bottom-right of Figure~\ref{fig:de_full}, the implication is that the cluster is a tight cluster
to begin with (most elements in it have high probability), and the
incoming query also contains most of these elements. The heavy penalty
for not containing these high probability elements reinforces the 
tightness of the cluster. We zoom in on this scenario in 
Figure~\ref{fig:de_half}, as it is the most important scenario for tight
clusters. Specifically, imagine a cluster has many elements of probability $p$, where $p$ is close to $1$. Call 
this set of elements $X$. Now, when a new query $Q$ comes in, if $X\not\subset Q$ there is a high entropy cost to\ putting $Q$ in the cluster. This means that the high probability regions of a cluster $K$ are likely conserved 
as new queries come in. As will be seen, this scenario is highly desirable for the real-time 
case. 

The scenario at the top-left of Figure~\ref{fig:de_full} is this: the set of data elements that are under consideration are not contained in 
a given cluster $K$, and the incoming query also does not contain most of these elements. This case is also important 
for describing clustering dynamics. A cluster that has been built already may contain some queries that have a few 
data elements with low probability. Let $X$ be such a set of data elements with $p$ close to $0$. Then a new query 
coming in is \emph{penalized} for containing \emph{any} of those elements. Those elements will remain with low
probability in the cluster. The implication is that once a cluster has some low probability elements, these 
elements will remain with low probability. 

\topic{Mediocre Region:}
This is potentially the most problematic region of the entire clustering. Consider a cluster where most of 
elements have probability $p$ close to $0.5$. Then when a new query comes in, the entropy change is similar 
regardless of how many elements of the cluster the query contains. And if the query is put into this cluster, then 
on average the probabilities should remain close to $0.5$, so the cluster type is perpetuated. Such a cluster is 
not very tight. 

\topic{Low-quality Region:}
This region describes when an ill-fitting query is put into a tight cluster. For example, if we have a cluster for which 
all the elements have probability $p$ close to $1$ and a query comes in which contains none of these elements, we 
would be in the top-right region of Figure~\ref{fig:de_full}. This is clearly the worst possible situation in terms of a 
clustering. Therefore it is a desirable feature of our clustering algorithm that the entropy cost for this 
scenario is relatively high.

From analyzing these regions, we see that our algorithm will generate high-quality and mediocre clusters (from the 
high-quality and mediocre regions respectively), but the entropy cost of the low-quality region is too high for 
many low-quality clusters to form. 

\eat{\subsection{Experimental Analysis of Clusterings}
We ran the the clustering algorithm on several sets of queries. All 
of these query sets are of size $50,000$ and are generated via the 
Erd\H{o}s-R\'{e}nyi graph regime with $0.9 < np < 1.0$ according to 
the query generation algorithm described in Section~\ref{sec:querygen}. 
We specifically tested the resulting clusters for quality of clustering
and for its applicability towards real-time processing. The ideas behind
real-time processing are more thoroughly discussed in 
Section~\ref{Ch:RT}, but the essential idea is this: given a small 
sample of queries beforehand for pre-computing, we need to be able to 
process new queries as they arrive, with no prior information about them. 

\subsubsection{Clustering Quality}
In a cluster of high-quality, most of the data elements in the cluster 
have probability close to $1.0$. Intuitively, this indicates that the 
queries in the cluster are all extremely similar to each other (i.e. 
they all contain nearly the same data elements). Then one measure of 
clustering quality would be to look at the probability of data elements
across clusters. 

As a first measure, we recorded the probability of each data element in 
each cluster, and Figure~\ref{fig:cluster_prob_hist} depicts the results in 
a histogram for a typical clustering. The high frequency of data 
elements with probability over $0.9$ indicates that a significant number
of data elements have high probability within their cluster. (Note that 
in this analysis, if a data element is contained in many clusters, its
probability is counted separately for each cluster.) However, 
interestingly, the distribution then becomes relatively uniform for 
all the other ranges of probabilities. This potentially illustrates the difference between mediocre and 
high-quality clusters described above. A more ideal clustering algorithm
would increase the number of data elements in the higher probability bins and
decrease the number of those in lower quality bins. Still, the 
prevalence of elements with probability greater than $0.9$ is heartening
because this indicates a fairly large common intersection among all the
clusters. By processing this common intersection alone, we 
potentially cover a significant fraction of each query in the cluster 
with just a single greedy algorithm.

To paint a broad picture, the above measure of cluster quality ignores the clusters themselves. There may be variables inherent to the cluster
which affect its quality and are overlooked. For example, perhaps 
clusters begin to deteriorate once they reach a certain size. 

Let us define the \emph{average probability} of cluster $K$ as: 
\begin{equation}
\overline{p}(K) = \frac{1}{\sum_{Q\in K}|Q|}\sum_{Q\in K}\sum_{x\in Q}p_x
\end{equation}
Essentially, $\overline{p}(K)$ is a weighted average of the 
probabilities of each element in the cluster (so data elements that are 
in many queries are weighted heavily). In Figure~\ref{fig:bycluster_probs}
we see that there is some deterioration of average probability as 
the clusters get larger, but for most of the size range, the quality
is well scattered. While most of the clusters have average probability 
greater than $0.6$, a stronger clustering algorithm would collapse 
this distribution upwards. 

\begin{figure}[tb]
\centering
\subfigure[Overall averages]{ \includegraphics[width=.22\textwidth]{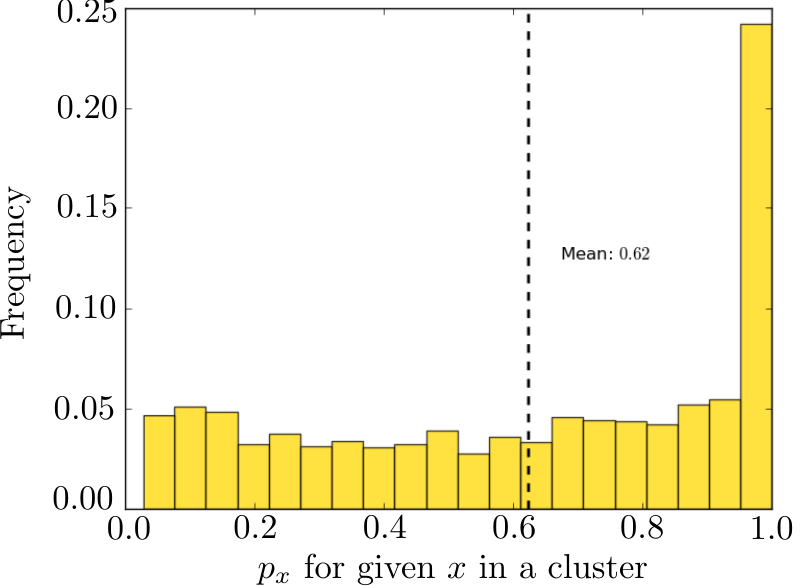} \label{fig:cluster_prob_hist}}
\hfill 
\subfigure[Weighted by cluster]{ \includegraphics[width=.22\textwidth]{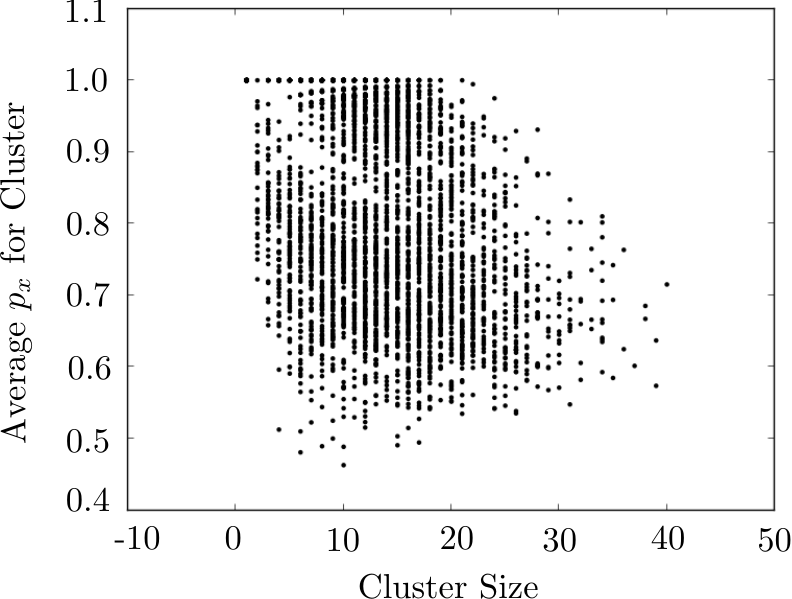} \label{fig:bycluster_probs}}

\caption[Clustering quality]{These graphs were generated after clustering $50,000$ queries for which 
$np = 0.973$. In (\protect\subref{fig:cluster_prob_hist}), we look at 
each data element in each cluster and record its probability in that cluster. A high-quality algorithm would 
have tall bars on the right and very short bars elsewhere. 
In (\protect\subref{fig:bycluster_probs}), we take a weighted average of the 
probability of each element. Note the downward trend as 
clusters get large. This may mean we should restrict cluster size.}
\label{fig:cluster_quality}
\end{figure}

\subsubsection{Real-time Applicability}
\label{subsec:realtime_app}
For $\mathtt{simpleEntropy}$ to be successful in dealing with real-time queries we
have additional requirements. First of all, the incoming queries need 
to be processed quickly. If, for example, the query needed to be checked
against every single cluster before it was put in one, simply running 
the greedy algorithm on it would probably be faster, since there are 
on the order of thousands of clusters. As described in Proposition~\ref{prop:runtime}, our algorithm with the 
added hash table meets the speed requirement. Second, we want most of the
cluster to be generated when only a small fraction of the queries are 
already processed. This way, most of the information about incoming 
queries is already computed, which allows us to improve running time. 
Figure~\ref{fig:cluster_progress} and Table~\ref{tab:cluster_progress} show that this requirement is fulfilled. 
Finally, we want incoming queries
to contain most of the high probability elements of the cluster. 
Specifically, let $Q_1, \dots, Q_n$ be the queries in a cluster $K$, 
where $X = Q_1 \cap \dots \cap Q_n$, and let $Q^*$ be the incoming 
query. For reasons that will become apparent when the real-time 
processing algorithm is discussed, we want $X\subset Q^*$. While this seems to be true for the high-quality 
clusters described above, we could seek to improve our algorithm to generate fewer mediocre clusters. 

\begin{figure}[tb]
\centering
\includegraphics[width=.3\textwidth]{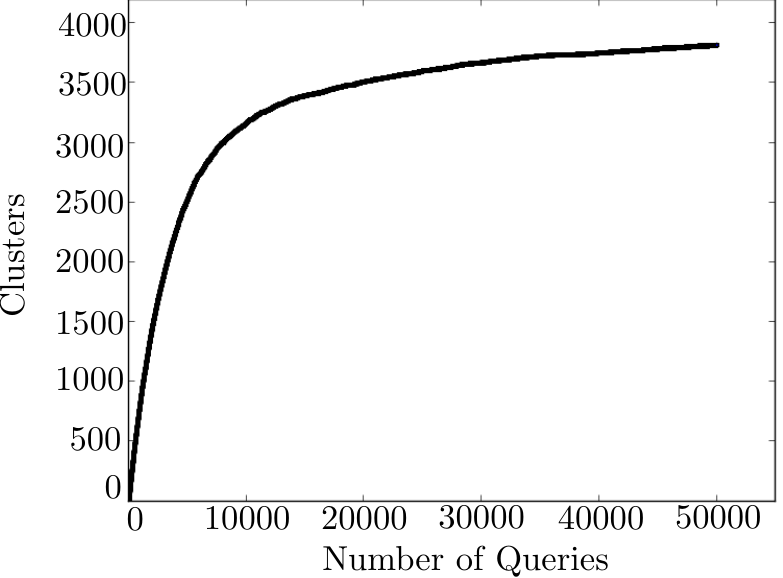}
\caption[Cluster generation as queries come in]{This figure was generated using $50,000$ queries generated from an Erd\H{o}s-R\'{e}nyi graph with 
$np = .999$. We plot the number of clusters we have as each query comes in.}
\label{fig:cluster_progress}
\end{figure}

\eat{\begin{table}[tb]
\tiny
\centering

\begin{tabular}{ccccccccccccc}
$\%$ Queries Processed & $\%$ Clusters Formed \\
\hline
 6.0 & 50.0 \\
10.0 & 66.1 \\
13.8 & 75.0 \\
25.0 & 86.6 \\
33.7 & 90.0 \\
40.0 & 91.9 \\
50.0 & 94.3 \\
53.7 & 95.0 \\
75.0 & 97.9 \\
88.2 & 99.0 \\
90.0 & 99.2 \\
99.5 & 99.9 \\
\hline
\end{tabular}
\caption[Clustering thresholds for real-time]{These results can help select the threshold for pre-processing queries to 
form most clusters. The data are taken from processing 50,000 queries from a random graph 
with $np = .999$. Potential thresholds include $13.8\%, 33\%, 40\%$.}
\label{tab:cluster_progress}
\end{table}}

\begin{table}[tb]
\tiny
\centering
\setlength\tabcolsep{1.2pt}
\begin{tabular}{c||c|c|c|c|c|c|c|c|c|c|c|c|}
\hline
$\%$ Queries Processed & 6.0 & 10.0 & 13.8 & 25.0 & 33.7 & 40.0 & 50.0 & 53.7 & 75.0 & 88.2 & 90.0 & 99.5 \\
\hline
$\%$ Clusters Formed & 50.0 & 66.1 & 75.0 & 86.6 & 90.0 & 91.9 & 94.3 & 95.0 & 97.9 & 99.0 & 99.2 & 99.9 \\
\hline
\end{tabular}
\vspace{3pt}
\caption[Clustering thresholds for real-time]{These results can help select the threshold for pre-processing queries to
form most clusters. The data are taken from processing 50,000 queries from a random graph
with $np = .999$. Potential thresholds include $13.8\%, 33\%, 40\%$.}
\label{tab:cluster_progress}
\end{table}}


\section{Cluster Processing}
\label{Ch:QueryProcessing}
\label{sec:cluster_processing}
Once our queries are clustered, the goal is to effectively process the clusters as 
a whole instead of processing each query individually. To that end, we first 
introduce our so-called $\mathtt{BetterGreedy}$ algorithm, which is a modified 
version of the standard greedy algorithm more suited to this problem. 
\subsection{The $\mathtt{BetterGreedy}$ algorithm} 
Recall that the standard greedy algorithm covers a query $Q$ with a small 
number of machines. The $\mathtt{BetterGreedy}$ algorithm is performed on a query 
$Q_1$ \emph{with respect to} another query $Q_2$. The pseudocode is given in 
Algorithm~\ref{alg:better_greedy}. At stage $k$, let $Q_k\subset Q_1$ be the still 
uncovered elements of $Q_1$. We choose the machine $M^*$ that contains the most 
elements of $Q_k$. In the standard greedy algorithm, if there is a tie, an $M^*$ 
is chosen arbitrarily. In $\mathtt{BetterGreedy}$, if there is a tie, 
we choose $M^*$ so that it \emph{also} maximizes the elements covered in $Q_2$. 
See Figure~\ref{fig:subset} for a visual example.

\begin{algorithm}
\footnotesize
\caption{The $\mathtt{BetterGreedy}$ algorithm. We cover query $Q_1$ with 
respect to $Q_2$. $\mathscr{D}$ is a dictionary, where 
$\mathscr{D}[i]$ returns the list of machines whose intersection with the 
uncovered portion of $Q_1$ is size $i$, and the list is sorted by intersection 
size with $Q_2\setminus Q_1$. 
}
\label{alg:better_greedy}
\begin{algorithmic}
\Function{BetterGreedy}{$Q_1$, $Q_2$, $\mathscr{D}$}
\State $\mathscr{M} \gets \varnothing$
\Comment $\mathscr{M}$ is the list of machines
\State $Q \gets Q_1$ 
\Comment $Q$ is the uncovered portion of $Q_1$. 
\While{$|Q| > 0$}
\State $M \gets \mathscr{D}$.\Call{Max}{}
\Comment The machine with the most uncovered elements
\State $\mathscr{M}$.\Call{Add}{$M$}
\State $Q \gets Q\setminus M$
\State $\mathscr{D}$.\Call{Update}{$Q$}
\Comment Update the intersection sizes with $Q$ 
\EndWhile
\State\Return $\mathscr{M}$
\EndFunction
\end{algorithmic}
\end{algorithm}

\subsection{Analysis of the $\mathtt{BetterGreedy}$ Algorithm}
\label{sec:better_greedy}
To make the algorithm as fast as possible, we have a dictionary of lists called 
$\mathtt{sets\_of\_size}$. Each key in this dictionary is the size of the intersection of 
each machine with the current uncovered set (and this dictionary is updated at 
each stage). Corresponding value is a set of all machines of that size. \eat{When there is a standard implementation of the greedy set cover algorithm.} 
If there are multiple machines under the same key (i.e. they have the 
same size with respect to the uncovered elements of $Q_1$), greedy set cover algorithm breaks tie by choosing random machine within a particular size key. However, in the case of our $\mathtt{BetterGreedy}$ algorithm, they are sorted 
according to the size of their intersection with $Q_2\setminus Q_1$. While 
this additional sorting makes the algorithm worse than standard greedy approach in the worst case 
(since all the machines could be the same size in $Q_1$), in practice, our 
clustering strives to make $Q_2 \setminus Q_1$ small, and so the algorithm is 
fast enough. 

\begin{proposition}
\label{lineargreedyproof}
The above described algorithm runs in $O(\sum_{k=1}^m |M_{k}|) = O(r\cdot |Q|)$, where $r$ is the replication factor when data is distributed evenly on the machines.
\end{proposition}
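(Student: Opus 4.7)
The plan is to amortize the total work over the elements of $Q$ and the edges of the underlying element--machine incidence structure. I would split the running time into three contributions: (i) the one-time construction of the dictionary $\mathscr{D}$ together with the tie-breaking key based on $Q_2\setminus Q_1$, (ii) the max-lookup in $\mathscr{D}$ at each iteration, and (iii) the update step $\mathscr{D}.\textsc{Update}(Q)$ after a machine is selected.

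For (i), I would scan each machine $M_k$ that meets $Q_1$, computing $|M_k\cap Q_1|$ and the secondary key $|M_k\cap(Q_2\setminus Q_1)|$ in one pass. This takes $O\!\bigl(\sum_{k=1}^{m}|M_k|\bigr)$ time. I then bucket the machines into $\mathscr{D}$ by $|M_k\cap Q_1|$, maintaining within each bucket the order given by the secondary key. The crucial observation I would exploit here is that the secondary key never changes during the execution of $\mathtt{BetterGreedy}$, because we only ever delete elements of $Q_1$ from $Q$, so the ordering within a bucket only needs to be computed once. Using bucket sort on the secondary key (whose values are bounded by $|Q_2|$) keeps this within the desired linear bound.

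For (ii), the max-lookup is $O(1)$ if $\mathscr{D}$ maintains a pointer to the currently largest nonempty bucket, which can be advanced or decremented in amortized $O(1)$ over the algorithm's lifetime. For (iii), the key amortization step: when $M^*$ is selected and its elements are removed from $Q$, every other machine $M$ that contained some removed element $e$ must have its intersection size decremented by one (and be moved to the appropriate smaller bucket, which is $O(1)$ per move given the bucket structure). The total number of such decrements equals $\sum_{e\in Q}\bigl(|\{k:e\in M_k\}|\bigr)$. Under the assumption of even data distribution with replication factor $r$, each element of $Q$ lies in exactly $r$ machines, so this sum is exactly $r\cdot|Q|$. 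This also equals $\sum_{k=1}^{m}|M_k\cap Q|\le\sum_{k=1}^{m}|M_k|$, giving the stated bound.

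The main obstacle is the bookkeeping around the secondary tie-breaking key: one must argue both that the key is invariant throughout the run of the algorithm (so it need not be recomputed) and that maintaining the intra-bucket order under decrement-and-move is $O(1)$ per move, e.g.\ by linking each machine into a doubly-linked list per bucket at initialization so that moving to the next-smaller bucket preserves the ordering for free. Once that is in place, adding contributions (i)--(iii) yields the claimed $O\!\bigl(\sum_{k=1}^{m}|M_k|\bigr) = O(r|Q|)$ running time.
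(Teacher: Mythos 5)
Your items (ii) and (iii) are, in substance, exactly the paper's own amortization: the paper's proof splits the work into \emph{blank steps} (advancing the bucket counter past empty keys, at most $O(|Q|)$ of them since the dictionary has at most $|Q|$ keys) and non-blank work charged one unit per data item removed from a machine, totalling $O\bigl(\sum_{k=1}^m |M_k|\bigr)=O(r\cdot|Q|)$; your decrement count $\sum_{e\in Q}|\{k : e\in M_k\}| = r|Q|$ is the same charging scheme in different clothing. Where you diverge is in scope: the paper's proposition and proof tacitly cover only the bucketed greedy with arbitrary tie-breaking --- indeed, immediately before the proposition the paper concedes that the secondary sorting by $|M\cap(Q_2\setminus Q_1)|$ makes $\mathtt{BetterGreedy}$ \emph{worse} than standard greedy in the worst case (``all the machines could be the same size in $Q_1$''). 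You instead attempt to fold the tie-breaking into the linear bound, which would be a strictly stronger result than the paper claims.

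That extra step is where there is a genuine gap, and it is the very point you flag as the main obstacle. Your static-key observation is correct (by the paper's definition, $\mathscr{D}$'s lists are ordered by intersection with the \emph{fixed} set $Q_2\setminus Q_1$), and the one-time bucket sort at initialization is fine. But a doubly-linked list per bucket does not give $O(1)$ ordered moves: when a machine drops from bucket $i$ to bucket $i-1$, its correct position depends on the secondary keys of the machines already resident in bucket $i-1$, so head or tail insertion breaks the sorted invariant, and sorted insertion costs up to the bucket size. The natural repair --- two-dimensional buckets $\mathscr{D}[i][j]$ indexed by the static secondary key $j$ --- does make each move $O(1)$, but then extracting the intra-bucket maximum needs a per-bucket pointer over $j$ whose empty-slot scans are not covered by your amortization: a bucket can become active with its stored maximum $j$ far above its next nonempty slot, costing up to $|Q_2\setminus Q_1|$ per activated bucket, i.e.\ $O(|Q|\cdot|Q_2\setminus Q_1|)$ overall, which is precisely the worst-case degradation the paper concedes. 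So your proof is valid if you either restrict the claim to the random-tie-break implementation (as the paper's proof implicitly does) or state the bound with an explicit additive overhead for tie-break maintenance; as written, the ``for free'' maintenance claim does not hold.
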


\begin{proof}
There are two main branches of our algorithm: either under the selected key in $sets\_of\_size$ is an empty or nonempty set. In the first case, we move our counter to one key below. There are at most $|Q|$ keys in this dictionary, so this branch, which we call a \textit{blank step}, will be performed at most $O(|Q|)$ times, and since it takes $O(1)$, the total time for this branch is $O(|Q|)$. In the other branch, when the set under selected key is not an empty set, we see in the description of the algorithm that in the innermost loop a data unit from a machine is removed (and some other things that all take $O(1)$ are performed). Since there are $\sum_{k=1}^m |M_{k}|$ data units in all machines combined, this part runs in $O(\sum_{k=1}^m |M_{k}|)$. To conclude, the whole algorithm runs in $O(\sum_{k=1}^m |M_{k}|+|Q|)=O(\sum_{k=1}^m |M_{k}|)$.
\end{proof}

\subsection{Processing Simple Clusters}

In this section we describe the most basic clusters and our ways to process them. 

\begin{figure*}
\centering
\begin{minipage}{0.48\textwidth}
\centering
\subfigure[Illustration of $\mathtt{BetterGreedy}$]{
\includegraphics[width=0.8\textwidth]{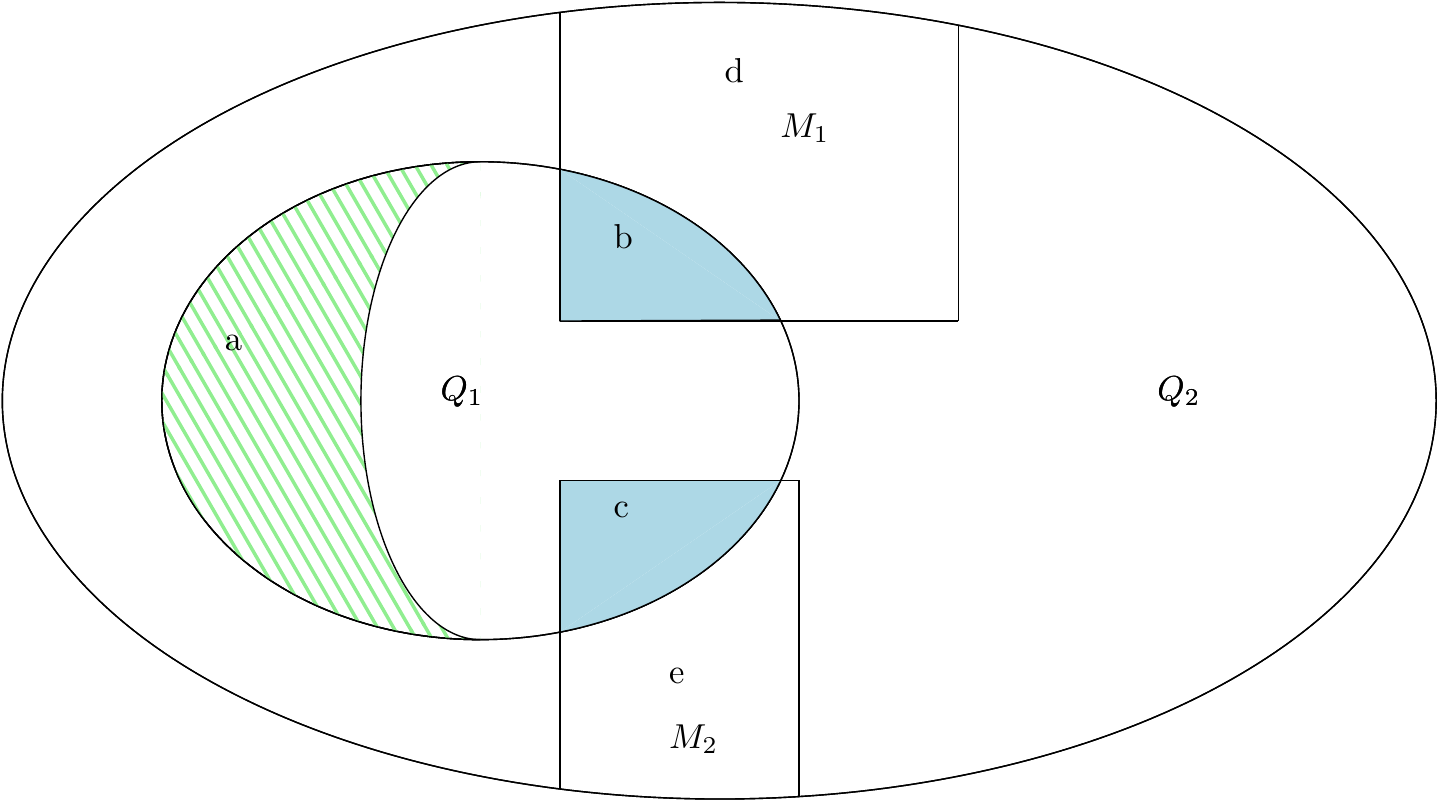}
\label{fig:subset}
}
\end{minipage}
\begin{minipage}{0.48\textwidth}
\centering
\subfigure[Visual representation of the query intersection algorithm]{\includegraphics[width=0.6\textwidth]{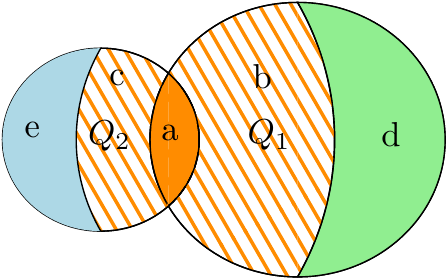}
\label{fig:greedy_intersection}}
\end{minipage}
\caption[]{In \textbf{\protect\subref{fig:subset}}, we see an example of $\mathtt{BetterGreedy}$. Assume region \textbf{a} has already been covered, and we have two machines $M_1$ and $M_2$ that cover regions \textbf{b} and \textbf{c} of $Q_1$ respectively, and let these regions be the same size. While the standard greedy algorithm would pick from $M_1$ and $M_2$ randomly, $\mathtt{BetterGreedy}$ chooses $M_1$ because the size of region \textbf{d} is bigger than the size of region \textbf{e}. The representation of $\mathtt{BetterGreedy}$ in \textbf{\protect\subref{fig:greedy_intersection}} demonstrates that we run the algorithm on the intersection (region \textbf{a}) and the striped sections (regions \textbf{b} and \textbf{c}), thus covering $Q_1$ and $Q_2$. Then we simply run the greedy algorithm on the remaining uncovered sections (regions \textbf{d} and \textbf{e}) to get the full covering. So, for example, the covering of $Q_1$ is given by the coverings of regions \textbf{a}, \textbf{b}, and \textbf{d}. With standard parameters, we find that covers are on average $0.15$ machines larger than the greedy cover.}
\end{figure*}

\eat{\begin{figure}[tb]
\includegraphics[width=.45\textwidth]{Graphics/subset.pdf}
\caption[Illustration of $\mathtt{BetterGreedy}$]{Here we see an example of $\mathtt{BetterGreedy}$. Assume region \textbf{a} has already been covered, and we have two machines $M_1$ and $M_2$ that cover regions \textbf{b} and \textbf{c} of $Q_1$ respectively, and let these regions be the same size. While the standard greedy algorithm would pick from $M_1$ and $M_2$ randomly, $\mathtt{BetterGreedy}$ chooses $M_1$ because the size of region \textbf{d} is bigger than the size of region \textbf{e}.
\label{fig:subset}}
\end{figure}}

 \begin{table}[tb]
 \footnotesize
 \centering
 \begin{tabular}{l l l l}
 \hline
 Algorithm Type & $Q_1$ & $Q_2$ &Uncovered Part\\ 
 \hline
 Cover just $Q_2$ & 1.06+ $G_1$ & $G_2$ \\ 
 Cover $Q_1$ with greedy & $G_1$ & $G_2$ + .41 & 1.80 \\
 Cover $Q_1$ with $\mathtt{BetterGreedy}$ & $G_1$ & $G_2$ + .05 & 1.31 \\ 
 \hline
 \end{tabular}
 \vspace{3pt}
 \caption[Processing simple clusters]{A comparison of the three methods for processing the cluster $Q_1 \subset Q_2$. $G_1$ gives the 
 size of the greedy cover of $Q_1$, and $G_2$ gives the size of the greedy cover of $Q_2$. Note that using $\mathtt{BetterGreedy}$
 to process $Q_1$ gives us a solution for $Q_2$ that is nearly always as good as the greedy solution.}
 \label{tab:bg_v_g}
 \end{table}

\topic{Nested Queries:} Consider the most simple query cluster: just two queries, $Q_1$ and $Q_2$, such that $Q_1 \subset Q_2$. One might suggest to simply find a cover only for $Q_2$, using the greedy algorithm, and use it as a cover for both $Q_1$ and $Q_2$. In practice, this approach does not perform well. This algorithm is unacceptable in terms of optimality of the cover when comparing the size of the cover for $Q_1$ given by this approach to the size of the cover produced by the greed algorithm (see Table~\ref{tab:bg_v_g}). Figure~\ref{fig:subset} with caption explains how our approach solves the problem judiciously.

\topic{Intersecting Queries:} Here we consider a simple cluster with two queries, $Q_1$ and $Q_2$ such that $Q_1\cap Q_2 \neq \emptyset$. We will do the following: 1) Cover $Q_1\cap Q_2$ using $\mathtt{BetterGreedy}$ with respect to $Q_1\cup Q_2$ 2) Cover the uncovered part of $Q_1$ and $Q_2$ separately, using the standard greedy algorithm 3) For the cover of $Q_1$ return the union of the cover for intersection and uncovered part of $Q_1$, and for the cover of $Q_2$ we will give the union of the cover for the intersection and the uncovered part of $Q_2$. By doing so, we run the $\mathtt{BetterGreedy}$ once and greedy algorithm twice instead of just running the greedy algorithm twice. However, in the first case, those two greedy algorithms and the $\mathtt{BetterGreedy}$ algorithm are performed on a smaller total size than two greedy algorithms in the second case. Our algorithm never processes the same data point twice, while the obvious greedy algorithm on $Q_1$ and $Q_2$ does. In terms of optimality of the covers obtained in this way, they are on average $0.15$ machines (each) larger than the covers we would get using the greedy algorithm. Figure~\ref{fig:greedy_intersection} gives a visual representation of the algorithm.

\eat{\begin{enumerate}
\setlength{\itemsep}{0pt}
\setlength{\parsep}{3pt}
\setlength{\topsep}{3pt}
\setlength{\partopsep}{0pt}
\setlength{\labelwidth}{1em}
\setlength{\labelsep}{0.5em}
\item Cover $Q_1\cap Q_2$ using $\mathtt{BetterGreedy}$ with respect to $Q_1\cup Q_2$ 
\item Cover the uncovered part of $Q_1$ and $Q_2$ separately, using the standard greedy algorithm 
\item For the cover of $Q_1$ return the union of the cover for intersection and uncovered part of $Q_1$, and for the cover of $Q_2$ we will give the union of the cover for the intersection and the uncovered part of $Q_2$
\end{enumerate}}

\subsection{The General Cluster Processing Algorithm (GCPA)}
\label{sec:GCPA}
Using ideas from the previous sections, we developed an algorithm for processing any cluster. We call it the General Cluster Processing Algorithm ($\mathtt{GCPA}$). The algorithm, in the simplest terms, goes as follows:
\eat{\begin{enumerate}\item }
1) Assign a value we call \emph{depth} to each data unit appearing in queries in our cluster. The depth of a data element is the number of queries that data unit is in. For example, consider a visual representation of a cluster on Figure~\ref{fig:mq_processing}(a). On the same figure under Figure~\ref{fig:mq_processing}(b) shows depths of different parts of the cluster. 
2) Divide all data units into sets we call \textit{data parts} according to the following rule: two data items are in the same data part if and only if they are contained in exactly the same queries. This will partition the union of the cluster. Also, we keep track of which parts make up each query (which we store in a hash table). 
3) We cover the data parts with our desired algorithm (greedy, $\mathtt{BetterGreedy}$, \dots) 
4) For each query we return the union of covers of data parts that make up that query as its cover.
\eat{\end{enumerate}}
 
This algorithm can process any shape of a given cluster and allows for a choice of the algorithm used to cover separate data parts. The big advantage of this algorithm is that each data unit (or data part) will be \emph{processed only once}, 
instead of multiple times as it would be if we were to use the greedy algorithm on each query separately. While dividing the cluster up into its constituent data parts is intensive, 
this is all pre-computing, and can be done at anytime once the queries are known. Again, it is important to note that our algorithms rely on being able to perform pre-computing for them to be effective. 

Since $\mathtt{BetterGreedy}$ chooses machines that cover as many elements in the 
cluster as a whole as possible, 
the covers of the data parts overlap, and makes their union smaller. 
One thing that can also be used in our favor is that when covering a certain part, 
we might actually cover some pieces of parts of smaller depths, as illustrated in 
Figure~\ref{fig:mq_processing}(c). Then, instead of covering the whole of those parts, 
we 
can cover just the uncovered elements. Figure~\ref{fig:mq_processing} shows how this work
s step by step\eat{ (note that the cluster shown is not a representative one and th
at certain details are omitted in our discussion for the sake of clearness; see app
endix for those details)}. This version of $\mathtt{GCPA}$, in which we use the gre
edy algorithm, we call $\mathtt{GCPA\_G}$.

Another option would be to use $\mathtt{BetterGreedy}$ for processing the parts. 
The $\mathtt{BetterGreedy}$ algorithms is done on the data parts with respect to 
the union of all queries containing that data part and is called 
$\mathtt{GCPA\_BG}$.
As we will see in the next section, this algorithm gives a major improvement in the
optimality of the covers compared to $\mathtt{GCPA\_G}$.

\begin{figure}[tb]
\centering
\includegraphics[width=.45\textwidth]{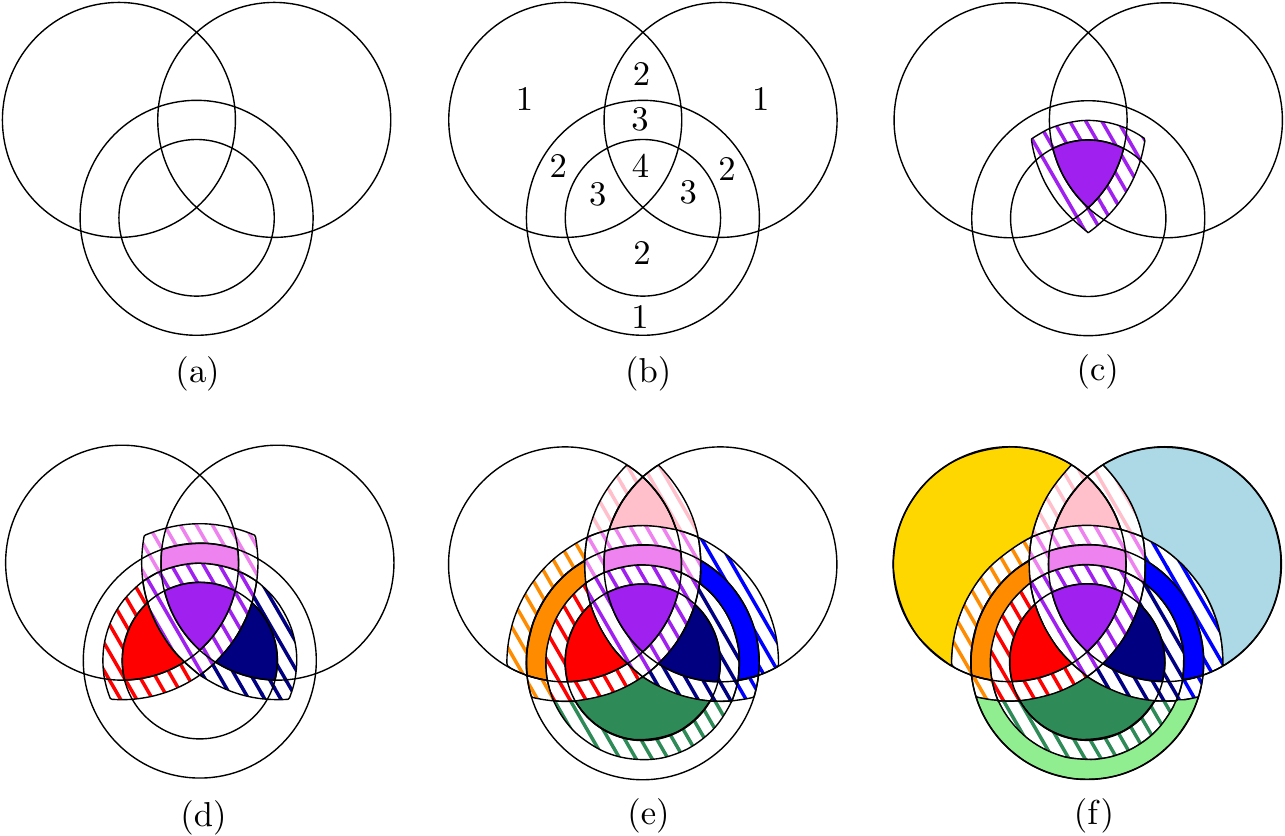}
\caption[Visual representation of general clustering algorithm]{Example of our algorithm for general cluster processing. In (a) we see the initial state of our cluster of $4$ queries. In (b) we see the calculated depths. From (c)-(f) we see in color the part we are processing, and in falling color pattern the cover we end up actually getting. This example shows that instead of doing the greedy algorithm 4 times, we end up doing it 11 times. However, the total size of the data that we are performing our algorithm on is much smaller than doing greedy 4 times because of the overlap in the queries (our algorithm never processes the same part more than once)}
\label{fig:mq_processing}
\end{figure}



\section{Query Processing in Real-time}\label{sec:real-time}\label{Ch:RT}

In our handling of real-time processing, we assume that we know everything about a certain fraction of the incoming queries beforehand (call
this the \emph{pre-real-time} set), and we get information about the remaining queries only when these queries arrive (the \emph{real-time} set). In other words, we have 
no information about future queries. To process query in real time the algorithm uses clusters formed in pre-processing. 

\subsection{The Real-time Algorithm}
Our strategy in approaching this problem is to take advantage of the real-time applicability of the $\mathtt{simpleEntropy}$ clustering algorithm. From Section~\ref{subsec:realtime_app}, we know from experiments, that we only need to process a small fraction of incoming queries to generate most of our clusters. Thus, we cluster the pre-real-time set of queries, and run one of the 
$\mathtt{GCPA}$ algorithms on the resulting clusters, storing some extra information which will be explained below. Then, we use this stored information to process the incoming 
queries quickly with a degree of optimality, as we explain below. 

We start by recalling the definition of a \emph{data part} (from Section~\ref{sec:GCPA}) and defining the related \emph{G-part}. Given a cluster $K$ and subset of queries in that cluster $P$, a \emph{data part} is the set of all elements
in 
the intersection of the queries in $P$ but not contained in any of the queries in $K\setminus P$. This implies that all the elements in the same data part have the same depth in the cluster. Figure~\ref{fig:partsnatalia} helps explain the 
concept. 

\begin{figure}[tb]
\centering
\includegraphics[width=.45\textwidth]{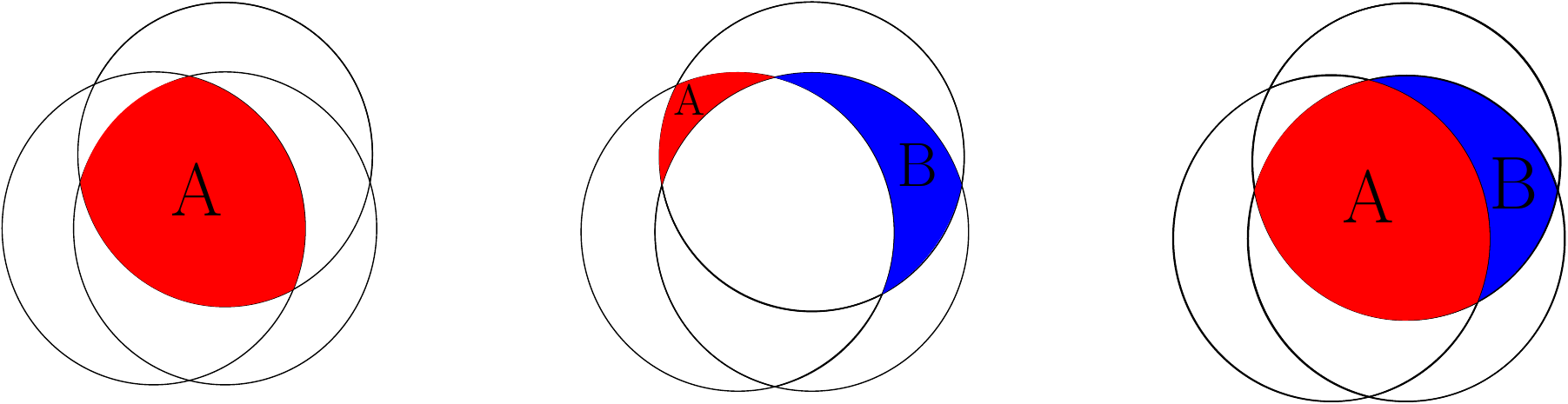}
\caption[Algorithm for query processing in real-time]{In the first picture the red area (A) is a part. In the second picture (A) and (B) areas are different parts, because, although they have the same depth, they are made from the intersection of different queries. In the third picture (A) and (B) areas are different parts, as they have different depth.}
\label{fig:partsnatalia}
\end{figure}

After running one of the $\mathtt{GCPA}$ algorithms, the cluster is processed from largest to smallest depth. The \emph{G-part} $p^{g}_{i}$ is the set of elements in the cover produced when
$\mathtt{GCPA}$ covers all elements in part $p_{i}$ that are not in any previous G-part. Note that G-parts also partition our cluster. 

To manage processing queries in real time, the algorithm makes use of queries previously covered in clustering process.  An array $T$ is created such that for each data item it stores the G-part containing this data item. In other words, element $T[i]$ is the G-part containing element $i$. For each G-part, we also store the machines that cover that G-part (this information is calculated and stored when $\mathtt{GCPA}$ is run on the non-real-time queries). The last data structure used in this algorithm is a hash table $H$, which stores, for each data element, a list of machines covering this data item. In each step the algorithm checks which G-part contains each data item and then for each G-part it checks which machines cover this G-part. Then those machines are added to the set of solutions (if they are not yet in this set). Then for each data item, which was not taken into consideration in any G-part, the algorithm checks in a hash table if any of the machines in the set of solutions covers the chosen data element. In the last step, we cover any still uncovered elements with the greedy algorithm. 
Data elements on which the greedy algorithm is run form a new G-part. 


\eat{\begin{proposition}
The real-time algorithm works in $O(N)$ time, where $N$ is the number of queries coming in real-time. 
\end{proposition}

\begin{proof}
For each query the algorithm works in constant time. The first step, while finding a G-part containing each data item, takes $O(k)$, where $k$ is the length of the query. Then the algorithm checks needed machines to cover the 
necessary G-parts, which also takes $O(k)$. In the last steps the algorithm checks if each of the uncovered elements in the query has been already covered, and then runs the greedy algorithm on the remaining data items. 
This last step also takes $O(k)$ time. In that way for each query, the algorithm has $O(k)$ time, where $k$ is bounded by a known constant (e.g. 15 in our case). For $N$ queries the real-time query processing algorithm 
thus works in $O(N)$ time. 
\end{proof}}

\eat{\subsection{Modified Clustering for real time queries}}
When a query $Q$ of a length $k$ comes in, the goal is to quickly put $Q$ into its appropriate cluster, so the above algorithm can be run. Since the greedy algorithm is linear in the length of the query, 
if it takes more than linear time to put a query into a cluster, our algorithm would be slower than just running the greedy algorithm on the query. 
\eat{\begin{proposition}
Assigning a query of size $k$ to a cluster takes $O(k^2)$ time. 
\end{proposition}
\begin{proof}
A query could have $O(k)$ potential clusters it could be put into (since we assume each element is in some 
bounded constant number of clusters, see proof of Proposition~\ref{prop:runtime}). Then, calculating the potential expected entropy of putting a query in each of the clusters also takes 
$O(k)$ time (since the probability of each data item must be calculated). Thus, with $O(k)$ calculations for each 
of $O(k)$ clusters, assigning a query to a cluster takes $O(k^2)$. 
\end{proof}}Thus, we need to develop a faster method of putting a query into a cluster. We implemented a straightforward solution. 
Instead of looking at all $O(k)$ potential clusters a query could go into, we just choose one of the elements of 
the query at random, and choose one of the potential clusters it is in at random. We call this the 
\emph{fast} clustering method, as opposed to the \emph{full} method (which is $O(k^2)$). \eat{We have the following 
obvious proposition: 

\begin{proposition}
Assigning a query $Q$ to a cluster using the fast clustering scheme takes $O(1)$ time. 
\end{proposition}

Because of the strong correlations 
between queries, we believe that even the fast method will still lead to significant intersections between the query 
and the cluster it is put into. The experimental results shown in the next section confirm this intuition.}

\eat{\subsection{Experimental Performance}
Since we can use both $\mathtt{GCPA\_G}$ and $\mathtt{GCPA\_DL}$ to process clusters and either the fast or full 
clustering method to assign queries to clusters, we now have four algorithms for the real-time process. We append
an $\mathtt{\_A}$ to the algorithm if we use the f\textbf{A}st clustering method (e.g. $\mathtt{GCPA\_G\_A}$ 
and a $\mathtt{\_U}$ if we use the f\textbf{U}ll clustering method (e.g. $\mathtt{GCPA\_DL\_U}$). 

We ran all the algorithms on an experimental query sample and compared them to the same reference algorithms as the 
non-real-time case (see \cref{sec:experimental_nrt}) to generate Figure~\ref{fig:rt_times,fig:rt_optimality}. We note that
the full clustering method is indeed even slower than $\mathtt{N\_Greedy}$, and the fast clustering method yields 
an algorithm faster than $\mathtt{N\_Greedy}$ and still more optimal the first baseline. In fact, it is interesting
to see that the optimality of the fast and full clustering methods are very similar. This indicates that our 
clustering has indeed stored a great deal of information, and any of the potential clusters of a given real-time query
are good enough for improving optimality. 

\begin{figure}[ht]
\centering
\subfigure[Run-time]{ \includegraphics[width=.22\textwidth]{Graphics/rt_times_bar_stds.png} \label{fig:rt_times}}
\hfill
\subfigure[Optimality]{ \includegraphics[width=.22\textwidth]{Graphics/rt_optimality_bar_stds.png} \label{fig:rt_optimality}}

\caption[Real-time run-time and optimality comparison]{The $x$-axis proceeds as follows:1. Baseline; 2. Better
Baseline; 3. $\mathtt{N\_Greedy}$; 4. $\mathtt{GCPA\_G\_A}$; 5. $\mathtt{GCPA\_G\_U}$; 6. 
$\mathtt{GCPA\_DL\_A}$; 7. $\mathtt{GCPA\_DL\_U}$. We used $50,000$ queries 
generated with $np = .995$, and $40\%$ of which were used for pre-computing. Error bars represent 1 standard
deviation.}
\label{fig:rt_time_and_opt}
\end{figure}

As required, we have delivered a real-time algorithm that is faster than $\mathtt{N\_Greedy}$ and more optimal than the baseline algorithm. }


\section{Experimental Evaluation}
\label{sec:experiments}
\subsection{Setup}
\subsubsection{Datasets}
We run our experiments on both synthetic and real-world datasets. The sizes considered in this work are the following: each data unit is replicated $3$ times and we consider cluster of 50 homogenous machines.

\topic{Synthetic Dataset:}
The total number of data items that we consider is $100$K. We generate about $50$K queries with certain correlation between them, and each query accesses between $6$ and $15$ data items. We note that all experiments in this section are done by averaging the results from $1$M runs. Following is an explanation of the correlated query generation:

\textit{Correlated Query Workload Generation:}
\label{sec:querygen}
A sample set of queries is needed to test the effectiveness of a set cover algorithm. As mentioned in Section~\ref{sec:background} the data is distributed randomly on the machines and the queries are correlated. To generate these queries we use \emph{random graphs}. In this context, vertices counted by $n$ represent data and edges represent relations of the data.

A random graph is a graph with $n$ vertices such that each of the ${n}\choose{2}$ 
possible edges has a probability $p$ of being present. We choose $n$ and $p$ 
such that $np < 1$, since, in the Erd\H{o}s-R\'{e}nyi model \cite{erd60}, this gives a graph with many small connected components. 
The $np < 1$ case is helpful in this setting because the graph is naturally partitioned into several components as expected. 
The random graph could, for example, represent a database that contains data from many organizations, so each organization's data is a connected component and  is separate from the others. We use a modified DFS algorithm on the random graph to 
generate nearly highly correlated random queries. 

 According to the Erd\H{o}s-R\'{e}nyi Model for a random graph \cite{erd60}, depending on the value of $n$ and $p$, there are three main regimes as shown in Figure~\ref{fig:regimes}\eat{ (see \cref{tab:inters})}:
 \squishlist
 \setlength\itemsep{0em}
 \item if $np<1$ then a graph will almost surely have no connected components with more than than $O(\log{n})$ vertices,
 \item if $np=1$ then a graph will almost surely have a largest component whose number of vertices is of order $n^{\frac{2}{3}}$
 \item if $np>1$ then a graph will almost surely have a large connected component and no other component will contain more then $O(\log{n})$ vertices.
 \squishend
 \begin{figure}[ht]
 \centerline{\includegraphics[width=0.48\textwidth]{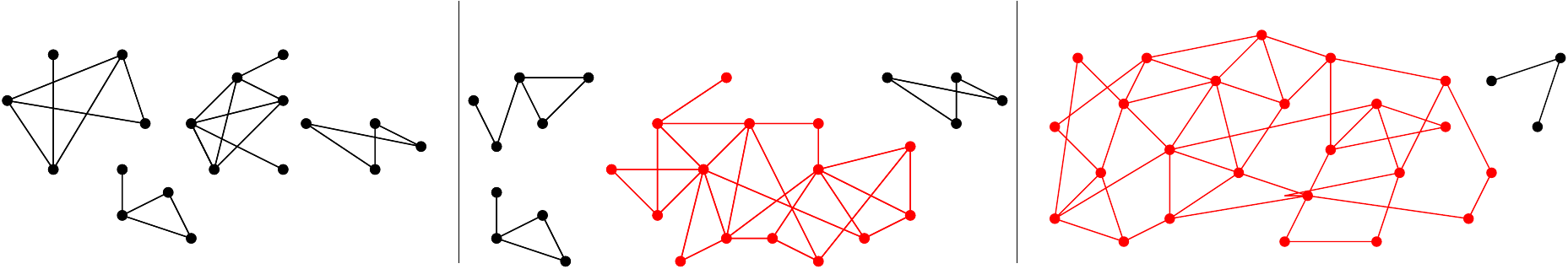}}
 \caption[The three regimes of the random graph model]{The first regime models a graph when $np<1$, middle when $np=1$ and the last one when $np>1$}
 \label{fig:regimes}
 \end{figure}
 The first regime was used because generating queries from smaller connected components theoretically makes intersections between them more probable. As it is expected that the queries have some underlying structure, the random graph model is an appropriate method of generating queries. The $np < 1$ case is helpful in this setting because the graph is naturally partitioned into several components as expected. The random graph could, for example, represent a database that contains data from many organizations, so each organization's data is a connected component and  is separate from the others. 
\begin{algorithm}
\caption{Query Generation Algorithm}\label{euclid}
\begin{algorithmic}[1]
\Require $N_Q$, $Q=\emptyset$
\For{$|Q|\neq N_Q$}
\State $q=\emptyset$, $K=\emptyset$
\State $l \gets rand(minQueryLen, maxQueryLen)$
\State $x = rand(v \in V)$
\State $q \gets q \cup x$
\State $K \gets K \cup \{v~|~(v, x) \in E, v\in V, x\in V\}$
\While{$|q| \neq l$}
\State $x = rand(v \in K)$
\If{$x\notin q$}
\State $q \gets q \cup x$
\EndIf
\EndWhile
\State $Q\gets Q\cup q$
\EndFor
\end{algorithmic}
\label{alg:QG}
\end{algorithm}
\texttt{QueryGeneration} algorithm (Algorithm~\ref{alg:QG}) is as follows. First, build a random graph $G$ with $np<1$. \eat{Query generation technique is shown in Algorithm~\ref{alg:alg1}. }Key idea here is to generate random subgraph with number of nodes equal to $l$ where $l$ is the query length, such that $(minQueryLen\leq l \leq maxQueryLen)$. Repeat this till we generate desired number of queries. \eat{In this work, we generate about 35K queries. Each query touches about $6$ to $15$ number of data items. \texttt{Query Generation} algorithm is used to generate nearly 33000 queries and}To assess the quality of our synthetic query workload generator, every query is compared with every other query to determine the size of intersections. Then the same number of queries were generated  uniformly randomly and pairwise intersections were calculated. As expected, queries generated using \texttt{QueryGeneration} algorithm have much more intersections then queries generated randomly. \eat{See \cref{tab:inters} for the results.}

\topic{Real-world Dataset:} We consider TREC Category B Section 1 dataset which consists of 50 million English pages. For queries we consider 40 million AOL search queries. In order to process these 50 million documents to document shards, we perform K-means clustering using Apache Mahout where K=10000. We consider each document shard as a data item in this paper. These document shards are distributed across 50 homogenous machines and are 3-way replicated. Each AOL web query is run through Apache Lucene to get top 20 document shards. Then we run our incremental set cover based routing to route queries to appropriate machines containing relevant document shards.\eat{Also, we divide our query dataset into two parts, first part has 30 million queries that is used as history for analyzing and applying techniques discussed in this paper. Then we evaluate our techniques on test dataset which is the second part containing 10 million queries.}

Overall, we evaluate our algorithms on a set of $50$K synthetically generated queries generated from a graph with $np = .993$ and on real-world dataset. $20$K queries from synthetic dataset and $8$M queries from real-world dataset among them are used to create clusters and our routing approach is tested on remaining $30$K queries from synthetic dataset and $32$M queries from real-world dataset.

\subsubsection{Baseline} 
\label{sec:base}
When a query $Q$ is received a request is sent to all machines that contains an element of $Q$. The machines are added to the set cover by the order in which they respond, until the query is covered. The first machine to respond is automatically added to the cover. The next machine to respond is added if it contains any element from the query that is not yet in the cover. This process is continued until all elements of the query $Q$ are covered. We call this method \textit{baseline} set covering. While the method is fast, there is no discrimination in the machines taken, which means 
that the solution returned is far from optimal, as the next example illustrates: 

Consider a query, $Q = \{1, \dots n\}$, and a set of machines, $M_1, \dots, M_{n+1}$, where $ M_i = \{i\}$ for $ i\leq n$ and $M_{n+1} = \{1, \dots n\}$. If machines $M_1, \dots , M_n$ respond first then this algorithm will cover $Q$ with $n$ machines where the optimal cover contains only one machine, $M_{n+1}$. 
Given N queries, our algorithm should improve upon the average optimality offered by this baseline covering method and be faster than running the greedy set cover $N$ times.

\subsubsection{Machine} 
The experiments were run on a Intel Core i7 quad core with hyperthreading CPU 2.93GHz, 16GB RAM, Linux Mint 13. We create multiple logical partitions within this setup and treat each logical partition as a separate machine. This particular experimental setup does not undermine the generality of our approach in anyway. Our results and observations stand valid even when run on distributed setup with real machines.

\begin{figure*}
\centering
\subfigure[Run-time for synthetic dataset]{ \includegraphics[width=.22\textwidth]{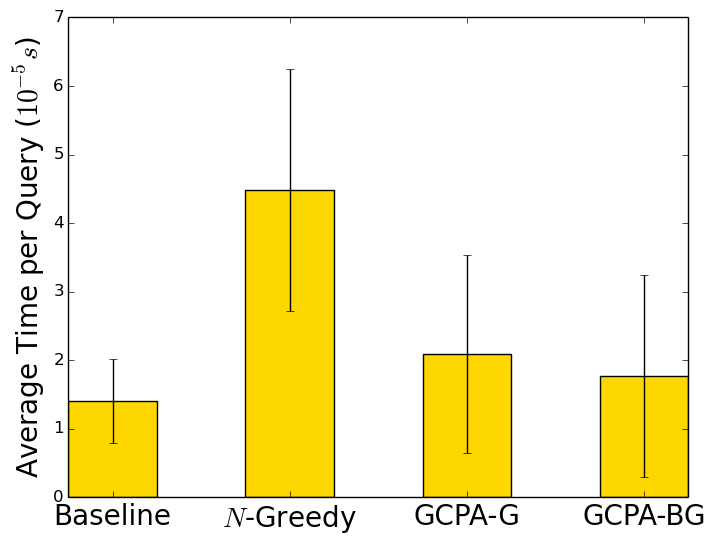} \label{fig:times}}
\hfill
\subfigure[Run-time for real-world dataset]{ \includegraphics[width=.22\textwidth]{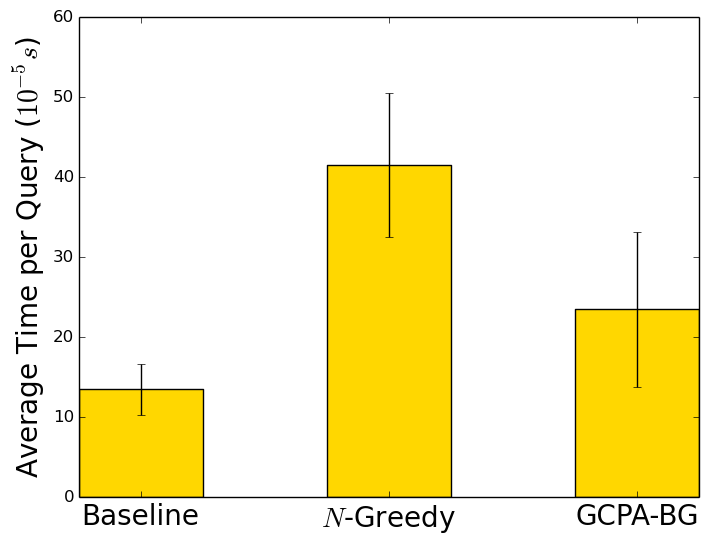}\label{fig:times-real}}
\hfill
\subfigure[Optimality for synthetic dataset]{ \includegraphics[width=.22\textwidth]{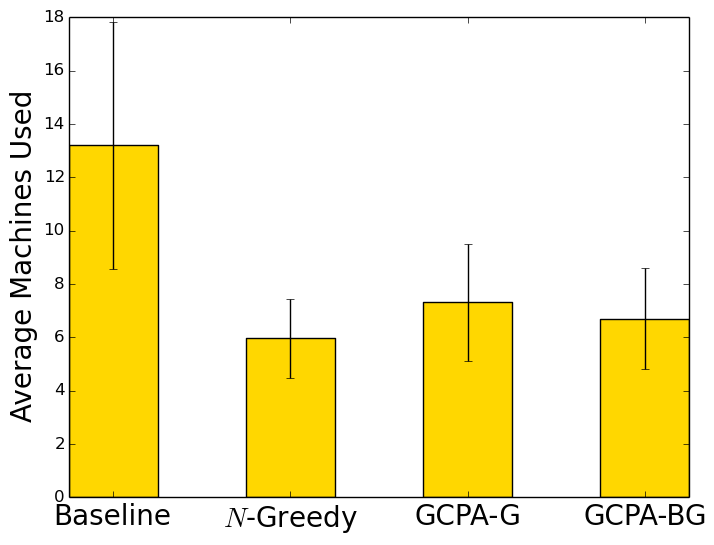}\label{fig:optimality}}
\hfill
\subfigure[Optimality for real-world dataset]{ \includegraphics[width=.22\textwidth]{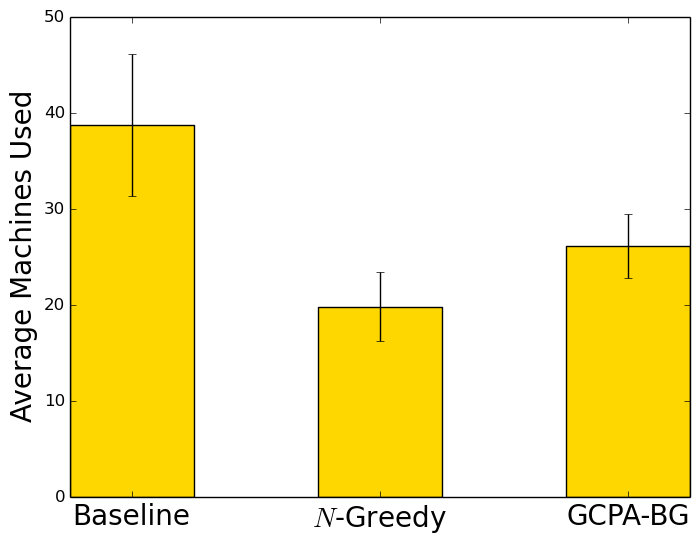}\label{fig:optimality-real}}
\caption[Comparison of run-times and optimality]{Comparison of run-time and optimality (average {\em query span}) of our algorithms on synthetic dataset and real-world dataset. 
\eat{and the three reference algorithms on a set of $50,000$ queries generated 
from a graph with $np = .993$. Our algorithms perform considerably
faster than $\mathtt{N\_Greedy}$ and are also faster than the 
smarter baseline algorithm. In terms of optimality, both our algorithms considerably 
outperform the standard baseline algorithm. Error bars for optimality represent 1 standard deviation.
Error bars for the timing graph are omitted since queries are not processed 1-by-1 with 
$\mathtt{GCPA}$.}}
\label{fig:times_and_optimality}
\end{figure*}

\subsection{Experimental Analysis of Clusterings}
We ran the the clustering algorithm on several sets of queries. All 
of these query sets are of size $50,000$ and are generated via the 
Erd\H{o}s-R\'{e}nyi graph regime with $0.9 < np < 1.0$ according to 
the query generation algorithm described in Section~\ref{sec:querygen}. 
We specifically tested the resulting clusters for quality of clustering
and for its applicability towards real-time processing. The ideas behind
real-time processing are more thoroughly discussed in 
Section~\ref{Ch:RT}, but the essential idea is this: given a small 
sample of queries beforehand for pre-computing, we need to be able to 
process new queries as they arrive, with no prior information about them. 

\subsubsection{Clustering Quality}
In a high-quality cluster most of the data elements 
have probability close to $1.0$. Intuitively, this indicates that the 
queries in the cluster are all extremely similar to each other (i.e. 
they all contain nearly the same data elements). Then one measure of 
clustering quality would be to look at the probability of data elements
across clusters. 

As a first measure, we recorded the probability of each data element in 
each cluster, and Figure~\ref{fig:cluster_prob_hist} depicts the results in 
a histogram for a typical clustering. The high frequency of data 
elements with probability over $0.9$ indicates that a significant number
of data elements have high probability within their cluster. (Note that 
in this analysis, if a data element is contained in many clusters, its
probability is counted separately for each cluster.) However, 
interestingly, the distribution then becomes relatively uniform for 
all the other ranges of probabilities. This potentially illustrates the difference between mediocre and 
high-quality clusters described in Section~\ref{top:cluster}. 
A more ideal clustering algorithm
would increase the number of data elements in the higher probability bins and
decrease the number of those in lower quality bins. Still, the 
prevalence of elements with probability greater than $0.9$ is heartening
because this indicates a fairly large common intersection among all the
clusters. By processing this common intersection alone, we 
potentially cover a significant fraction of each query in the cluster 
with just a single greedy algorithm.

To paint a broad picture, the above measure of cluster quality ignores the clusters themselves. There may be variables inherent to the cluster
which affect its quality and are overlooked. For example, perhaps 
clusters begin to deteriorate once they reach a certain size. 

Let us define the \emph{average probability} of cluster $K$ as: 
\begin{equation}
\overline{p}(K) = \frac{1}{\sum_{Q\in K}|Q|}\sum_{Q\in K}\sum_{x\in Q}p_x
\end{equation}
Essentially, $\overline{p}(K)$ is a weighted average of the 
probabilities of each element in the cluster (so data elements that are 
in many queries are weighted heavily). In Figure~\ref{fig:bycluster_probs}
we see that there is some deterioration of average probability as 
the clusters get larger, but for most of the size range, the quality
is well scattered. While most of the clusters have average probability 
greater than $0.6$, a stronger clustering algorithm would collapse 
this distribution upwards. 

\begin{figure}[tb]
\centering
\subfigure[Overall averages]{ \includegraphics[width=.22\textwidth]{Graphics/new/F7a.pdf} \label{fig:cluster_prob_hist}}
\hfill 
\subfigure[Weighted by cluster]{ \includegraphics[width=.22\textwidth]{Graphics/new/F7b.pdf} \label{fig:bycluster_probs}}

\caption[Clustering quality]{These graphs are generated by clustering $50$K synthetically generated queries for which 
$np = 0.973$. In Figure~\ref{fig:cluster_prob_hist}, we look at 
each data element in each cluster and record its probability in that cluster. A high-quality algorithm would 
have tall bars on the right and very short bars elsewhere. 
In Figure~\ref{fig:bycluster_probs}, we take a weighted average of the 
probability of each element. Note the downward trend as 
clusters get large. This may mean we should restrict cluster size.}
\label{fig:cluster_quality}
\end{figure}

\subsubsection{Real-time Applicability}
\label{subsec:realtime_app}
For $\mathtt{simpleEntropy}$ to be successful in dealing with real-time queries we
have additional requirements. First of all, the incoming queries need 
to be processed quickly. If, for example, the query needed to be checked
against every single cluster before it was put in one, simply running 
the greedy algorithm on it would probably be faster, since there are 
on the order of thousands of clusters. \eat{As described in \cref{prop:runtime}, o}
The \emph{fast} version of the algorithm, which only samples one element from 
each of the real-time queries, meets this requirement. 
Second, we want most of the
cluster to be generated when only a small fraction of the queries are 
already processed. This way, most of the information about incoming 
queries is already computed, which allows us to improve running time. 
In Figure~\ref{fig:cluster_progress} and Table~\ref{tab:cluster_progress}, we see that more than 75\% of the total clusters 
are generated with only 20\% of the data processed, which means our clustering 
algorithm is working as we want it to. Finally, we want incoming queries
to contain most of the high probability elements of the cluster. 
Specifically, let $Q_1, \dots, Q_n$ be the queries in a cluster $K$, 
where $X = Q_1 \cap \dots \cap Q_n$, and let $Q^*$ be the incoming 
query. We want $X\subset Q^*$, since this means that the deepest G-part can cover a
lot of $Q$ with little waste. While this seems to be true for the high-quality 
clusters described above, we could seek to improve our algorithm to generate fewer mediocre clusters. 

\begin{figure}[tb]
\centering
\includegraphics[width=.35\textwidth]{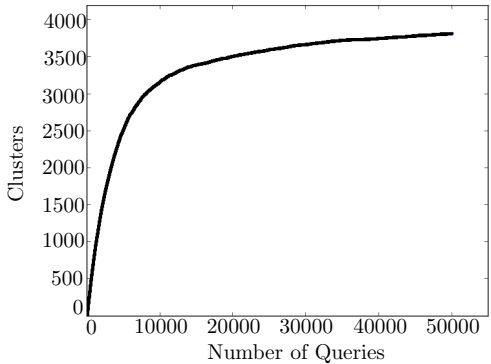}
\caption[Cluster generation as queries come in]{We plot the number of clusters as each of $50$k queries comes in, after being generated from an Erd\H{o}s-R\'{e}nyi graph with 
$np = .999$.}
\label{fig:cluster_progress}
\end{figure}

\eat{\begin{table}[tb]
\tiny
\centering

\begin{tabular}{ccccccccccccc}
$\%$ Queries Processed & $\%$ Clusters Formed \\
\hline
 6.0 & 50.0 \\
10.0 & 66.1 \\
13.8 & 75.0 \\
25.0 & 86.6 \\
33.7 & 90.0 \\
40.0 & 91.9 \\
50.0 & 94.3 \\
53.7 & 95.0 \\
75.0 & 97.9 \\
88.2 & 99.0 \\
90.0 & 99.2 \\
99.5 & 99.9 \\
\hline
\end{tabular}
\vspace{3pt}
\caption[Clustering thresholds for real-time]{These results can help select the threshold for pre-processing queries to 
form most clusters. The data are taken from processing 50,000 queries from a random graph 
with $np = .999$. Potential thresholds include $13.8\%, 33\%, 40\%$.}
\label{tab:cluster_progress}
\end{table}}

 \begin{table}[tb]
 \footnotesize
 \centering
 \setlength\tabcolsep{1.2pt}
 \begin{tabular}{c||c|c|c|c|c|c|c|c|c|c|c|c|}
 \hline
 $\%$ Queries Processed & 6.0 & 10.0 & 13.8 & 25.0 & 33.7 & 40.0 & 50.0 & 53.7 & 75.0 & 88.2 & 90.0 & 99.5 \\
 \hline
 $\%$ Clusters Formed & 50.0 & 66.1 & 75.0 & 86.6 & 90.0 & 91.9 & 94.3 & 95.0 & 97.9 & 99.0 & 99.2 & 99.9 \\
 \hline
 \end{tabular}
 \vspace{3pt}
 \caption[Clustering thresholds for real-time]{These results can help select the threshold we need for pre-processing queries to 
 form most of our clusters. The data are taken from processing 50,000 queries from a random graph 
 with $np = .999$. Potential thresholds include $13.8\%, 33\%, 40\%$.}
 \label{tab:cluster_progress}
 \end{table}

\eat{\subsection{Evaluating BetterGreedy Algorithm}
In \cref{fig:bg_lg} we compare the optimality of covers produced by the $\mathtt{BetterGreedy}$ algorithm and covers produced using the standard greedy algorithm. The figure shows that for larger queries, $\mathtt{BetterGreedy}$
indeed does provide a considerably more optimal solution. It is, however a slower algorithm (as we explain below), so once again the trade-off between speed and optimality should be considered when deciding which 
algorithm should be used. 

\begin{figure}
\centering
\includegraphics[width=.3\textwidth]{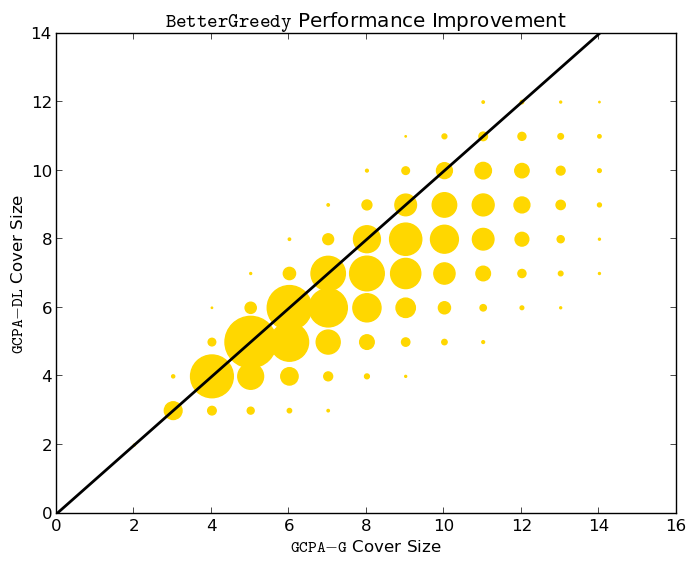}
\caption[$\mathtt{BetterGreedy}$ versus the standard greedy algorithm]{The $x$-axis of this graph is the number of sets required to cover a query using the greedy algorithm. The $y$-axis is how many 
more sets are needed to cover it with $\mathtt{BetterGreedy}$. The size of the circle indicates the number of 
queries at that coordinate. The black line indicates when the two covers are the same size. We see that as queries become larger, $\mathtt{BetterGreedy}$ outperforms the greedy algorithm in optimality.}
\label{fig:bg_lg}
\end{figure}}

\eat{The problem with both implementations is that the analysis of the time complexity of the linear greedy algorithm (\cref{lineargreedyproof}) no longer holds, as in the first implementation we have more \emph{blank steps}
 than in the linear greedy algorithm\eat{ (see proof of \cref{lineargreedyproof}, here we call a blank step every step at which our counter has selected an empty set)}, and in the second implementation we have sorting (which of course takes $O(n \log(n))$, where $n$ is the size of the set we desire to sort).}
 
 \subsection{Experimental Comparison of Cluster Processing Algorithms}
\label{sec:experimental_nrt}
For comparing our cluster processing algorithms we had implemented two reference algorithms and two that we developed ourselves. We show in this section that our algorithms are successful, in that they are both fast and optimal. 

\eat{The first reference algorithm is the baseline algorithm, mentioned in \cref{sec:base}. While implementing that algorithm, we have realized that a much better performing algorithm could be used, while retaining the same time performance. We call it $\mathtt{BetterBaseline}$. At each step, $\mathtt{BetterBaseline}$ chooses a random uncovered data unit and then a random machine that covers that data unit, and then checks what that machine covers and updates the uncovered part. This way, instead of possibly having chosen all the machines that have a nonempty intersection with our query, which might happen using the baseline algorithm, we can have at most $|Q|$ chosen machines. This algorithm is far ``smarter'' than the methods normally used as the baseline, since the algorithm only looks at machines that actually intersect the query. Usually, a query is 
sent to all machines, and whichever respond first are used as the cover for the query\eat{ \cite{kul15}}. While the baseline algorithms are fast, the covers they return are far from optimal. Thus, the goal is for our algorithms to 
generate more optimal covers than the baseline algorithms. }

The first reference algorithm is the one primarily evaluated in the papers by Kumar and Quamar et al.,~\cite{Kumar:2014:SWD:2691523.2691545}\cite{Quamar:2013:SSW:2452376.2452427}\cite{ashwin_phd}, we call it $\mathtt{N\_Greedy}$. This is simply running the greedy algorithm on each query independently. This algorithm has the opposite properties of the baseline algorithms. While its covers are as close 
to optimal as possible, it has a longer run-time than the baseline. Thus, we want our algorithm to run faster than $\mathtt{N\_Greedy}$. 

The two algorithms that we have developed and implemented are the $\mathtt{GCPA}$ with the greedy algorithm (\eat{we call this }$\mathtt{GCPA\_G}$) and $\mathtt{GCPA}$ with $\mathtt{BetterGreedy}$ (\eat{we call this }$\mathtt{GCPA\_BG}$). The major difference between the reference algorithms and our algorithms is that we are using clustering to exploit the correlations
and similarities of the incoming queries. Our algorithms are faster than $\mathtt{N\_Greedy}$ and more optimal than the baseline algorithms.

We compare the run-time and optimality (average number of machines that a query touches) of our algorithms
and the two reference algorithms on \eat{a set of $50$K synthetically generated queries generated 
from a graph with $np = .993$ and on real-world dataset. $20$K queries from synthetic dataset and $8$M queries from real-world dataset among them are used to create clusters and our routing approach is tested on remaining $30$K queries from synthetic dataset and $32$M queries from}both synthetic and real-world datasets. Our algorithms perform considerably
faster than $\mathtt{N\_Greedy}$ and are also faster than the 
smarter baseline algorithm. In terms of optimality, both our algorithms considerably 
outperform the standard baseline algorithm as shown in Figure~\ref{fig:times_and_optimality}\eat{, the results are favorable}. In summary, when evaluating with synthetic dataset, as shown in Figures~\ref{fig:times} and \ref{fig:optimality}, our technique is about $2.5\times$ faster when compared to repeated greedy technique $\mathtt{N\_Greedy}$ and selects $50\%$ fewer machines when compared to baseline routing technique. On the other hand, we evaluate on GCPA-BG for the real-world dataset because it has better optimality, and in the real-time case, the time penalty for using GCPA-BG over GCPA-G is only relevant in the pre-computing stage. For real-world dataset case, as shown in Figures~\ref{fig:times-real} and \ref{fig:optimality-real}, our technique is about $2\times$ faster when compared to repeated greedy technique $\mathtt{N\_Greedy}$ and selects $32\%$ fewer machines when compared to baseline routing technique. \eat{Error bars for performance and optimality represent 1 standard deviation}The error bars shown are one standard deviation long.
\eat{Error bars for the timing graph are omitted since queries are not processed 1-by-1 with 
$\mathtt{GCPA}$.}\eat{We have tested all the algorithms in terms of processing time per query and average load (number of machines accessed) per query.} Even though 
the figures show the results for only one set of queries, we have run dozens of samples, and the overall picture is the same. 
The results of our experiments provide strong indication that our algorithm is indeed
an effective method for incremental set cover, in that it is faster than $\mathtt{N\_Greedy}$
and more optimal than the baseline. 

In terms of optimality, it is also important to do a pairwise comparison of cover lengths (i.e. does our algorithm perform better for queries of any size). 
Taking the average as we have done in Figure~\ref{fig:optimality} masks potentially important variation. 
We want to ensure that our algorithm effectively handles queries of all sizes. In Figures~\ref{fig:bg_ng} and \ref{fig:lg_ng}, we compare the query-by-query performance (in terms of optimality) 
of our two algorithms against $\mathtt{N\_Greedy}$ for the synthetic dataset.  
The $x$-axis is the number of sets required to cover a 
query using $\mathtt{N\_Greedy}$. The $y$-axis, ``$\Delta$ Cover Length",
is the length of the cover given by our algorithm minus the length of the 
greedy cover. 
The number next to the $y$-axis at $y=k$ shows the normalized proportion 
of queries for which 
the $\mathtt{GCPA}$ cover is at most $k$ machines larger than the 
$\mathtt{N\_Greedy}$. The size of the circle indicates the number of 
queries at that coordinate. 
With $\mathtt{GCPA\_BG}$, we see that more than $90\%$ of all queries are covered with at most one more machine than the greedy 
cover, and for the majority of queries, the covers are the same size. The $\mathtt{GCPA\_G}$ algorithm does not perform quite as well. Even in this case, the majority of queries 
are covered using only one more machine than the greedy cover. Since $\mathtt{GCPA\_BG}$ is slower than $\mathtt{GCPA\_G}$, users can choose their algorithm based on their preference for speed or optimality. 

On the other hand, in Figure~\ref{fig:lg_ng_real}, we evaluate the performance of our real-time algorithm on the real-world dataset on a query-by-query basis. For each query, we record the number of machines used to cover it using our algorithm and the number of machines required to cover it using the baseline algorithm and record the difference, i.e. the ``$\Delta$ Cover Length'' on the $y$-axis is the size of the baseline cover minus the size of our algorithm's cover. The area of the circle at point $(x,y)$ is proportional the number of queries for which our algorithm used $x$ machines to cover and for which the difference in cover length is $y$. Thus, the total area of the points above the $y=0$ line represents where our algorithm outperforms the baseline algorithm. We see in the figure that the vast majority (96.5\%) of the queries are covered more efficiently by our algorithm than by the baseline algorithm. This is actually a bit of an understatement because most of the queries for which the baseline algorithm performed with better optimality (i.e. points below $y=0$) were queries of length one (or at least very small), which can easily be handled as special cases. In this case, only one machine is required to cover, but our algorithm takes the entire cover from the cluster that the length one query was put into. This situation is easily remedied. For small enough queries, especially queries of length one, instead of running our algorithm we should just cover them directly.
\begin{figure*}
\centering
\subfigure[$\mathtt{GCPA\_DL}$ vs. $\mathtt{N\_Greedy}$ for synthetic dataset]{ \includegraphics[width=.3\textwidth]{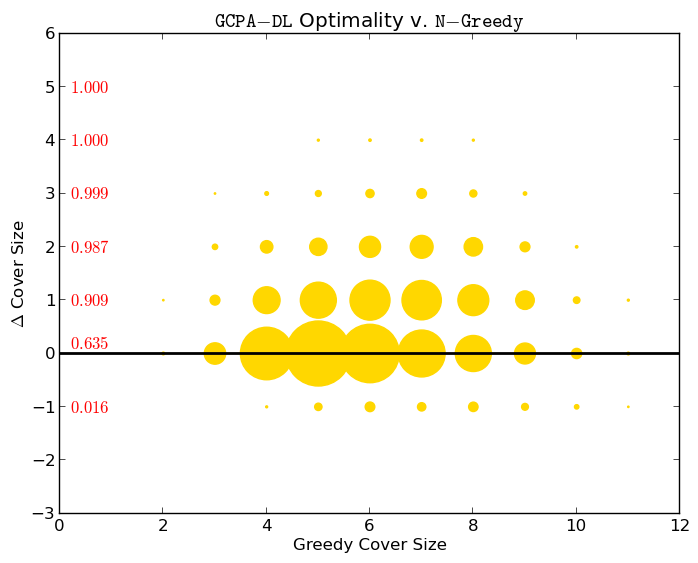} \label{fig:bg_ng}}
\hfill
\subfigure[$\mathtt{GCPA\_G}$ vs. $\mathtt{N\_Greedy}$ for synthetic dataset]{ \includegraphics[width=.3\textwidth]{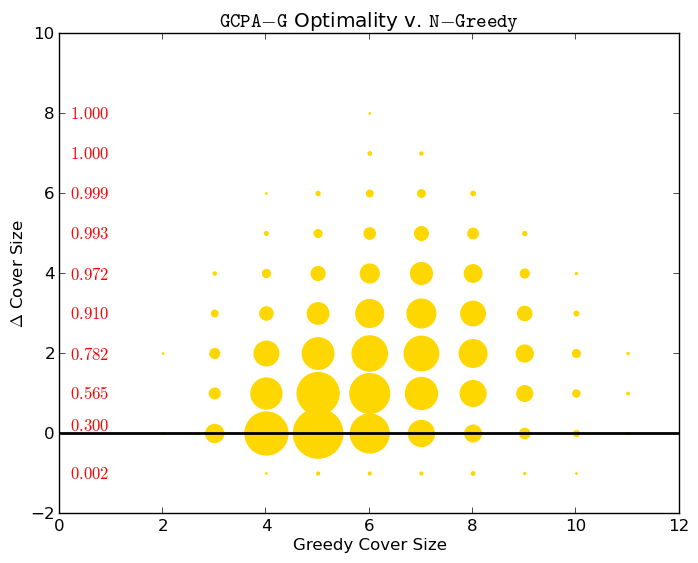} \label{fig:lg_ng}}
\hfill
\subfigure[$\mathtt{GCPA}$ vs. $\mathtt{Baseline}$ for real-world dataset]{ \includegraphics[width=.3\textwidth]{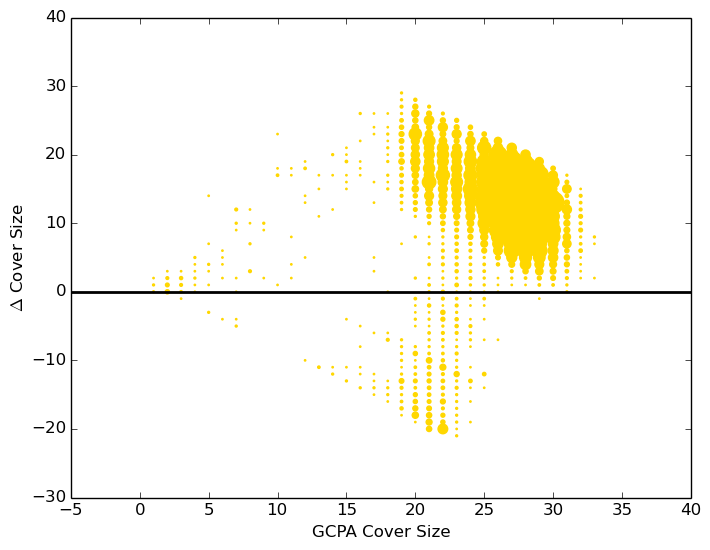} \label{fig:lg_ng_real}}
\caption[Pairwise optimality comparisons]{ 
Pairwise comparisons of optimality for our algorithms. 
}
\end{figure*}
\eat{It is important here that we describe more precisely what we mean when we time our algorithms. 
Our algorithms require a significant amount of pre-processing (the clustering and then breaking up the cluster into parts as described above) which we do \emph{not} take into account during the timing. We only 
count the segments of the algorithm that actually access machines when we are timing. 
The remark is not an ad-hoc decision. The pre-computation can be done well before any queries actually need to be processed, and can thus be amortized over an arbitrary amount of time. It is the accessing of the machines that is 
costly. In the real-time case discussed in the following chapter, we have to reconsider our timing methods since there is no time for pre-computation with a query that comes in real-time.

As required,}In conclusion, we have delivered an algorithm that is significantly faster than $\mathtt{N\_Greedy}$ and also more optimal than the baseline algorithm. 


\section{Conclusion}
In this paper, we presented an efficient routing technique using the concept of {\em incremental set cover} computation. The key idea is to reuse the parts of {\em set cover} computations for previously processed queries to efficiently route real-time queries such that each query possibly touches a minimum number of machines for its execution. To enable the sharing of {\em set cover} computations across the queries, we take advantage of correlations between the queries by clustering the known queries and keeping track of computed query {\em set covers}. We then reuse the parts of already computed {\em set covers} to cover the remaining queries as they arrive in real-time. We evaluate our techniques using both real-world TREC with AOL datasets, and simulated workloads. Our experiments demonstrate that our approach can speedup the routing of queries significantly when compared to repeated greedy {\em set cover} approach without giving up on optimality. We believe that our work is extremely generic and can benefit variety of scale-out data architectures such as distributed databases, distributed IR, map-reduce, and routing of VMs on scale-out clusters.



\bibliographystyle{IEEEtran}
\scriptsize
\bibliography{AA-Bibliography/Biblio}

\end{document}